\newtheorem{observation}[theorem]{Observation}
\newcommand{\restate}[3]{
\smallskip
\noindent {\bf #1~#2.}
\textit{#3}
\smallskip
}
\newcommand{\notshow}[1]{{}}
\newcommand{\restatethm}[2]{\restate{Theorem}{#1}{#2}}
\title{Welfare and Rationality Guarantees for the\\ Simultaneous Multiple-Round Ascending Auction}
\titlerunning{Welfare and Rationality Guarantees for the SMRA}
 \author{Nicolas Bousquet\inst{1} \and Yang Cai\inst{2}\and Adrian Vetta\inst{3}}
 \institute{
 Department of Mathematics and Statistics, McGill University, and LIRIS, Ecole Centrale Lyon\\
 \email{nicolas.bousquet@ec-lyon.fr}
 \and 
 School of Computer Science, McGill University\\
  \email{cai@cs.mcgill.ca}
 \and
 Department of Mathematics and Statistics, and School of Computer Science, McGill University\\
\email{vetta@math.mcgill.ca}
}
\begin{document}

\maketitle

\vspace{-10pt}

\begin{abstract}
The simultaneous multiple-round auction (SMRA) and the combinatorial clock auction (CCA) are
 the two primary mechanisms used to sell bandwidth. Under truthful bidding, the SMRA is known to 
output a Walrasian equilibrium that maximizes social welfare provided the bidder valuation functions
 satisfy the gross substitutes property~\cite{Mil00}. 
Recently, it was shown that the combinatorial clock auction (CCA) provides good welfare guarantees
for general classes of valuation functions~\cite{BCH15}. This motivates the question of whether
similar welfare guarantees hold for the SMRA in the case of general valuation functions.

We show the answer is no.
But we prove that good welfare guarantees still arise if the degree of complementarities in the 
bidder valuations are bounded. In particular, if bidder valuations functions are 
$\alpha$-near-submodular then, under \emph{truthful bidding}, the SMRA has a welfare ratio (the worst case ratio
between the social welfare of the optimal allocation and the auction allocation)
of at most $(1+\alpha)$. The special case of submodular valuations, namely $\alpha=1$, was studied in~\cite{FKL12}
and produces individually rational solutions. However, for $\alpha>1$, this is a bicriteria guarantee, to obtain good welfare under truthful bidding requires relaxing individual rationality. In particular,
it necessitates a factor $\alpha$ loss in the degree of individual rationality provided by the auction.
We prove this bicriteria guarantee is asymptotically (almost) tight.

Truthful bidding, though, is not reasonable assumption in the SMRA~\cite{Cra13}. But, bicriteria
guarantees continue to hold for natural bidding strategies that are \emph{locally optimal}. Specifically,
the welfare ratio is then at most $(1+\alpha^2)$ and the individual rationality guarantee is again at most $\alpha$, 
for $\alpha$-near submodular valuation functions. These bicriteria guarantees are also (almost) tight.

Finally, we examine what strategies are required to ensure individual rationality in the SMRA with general valuation
functions. First, we provide a weak characterization, namely \emph{secure bidding}, for individual rationality.
We then show that if the bidders use a profit-maximizing secure bidding strategy 
the welfare ratio is at most $1+\alpha$. 
Consequently, by bidding securely, it is possible to obtain the same welfare guarantees as truthful bidding
without the loss of individual rationality.
Unfortunately, we explain why secure bidding may be incompatible with the auxiliary
bidding activity rules that are typically added to the SMRA to reduce gaming.

\vspace{5pt}

\noindent{\bf Keywords:} ascending auctions, SMRA, welfare guarantee, individual rationality, near-submodular.
\end{abstract}

\vspace{-15pt}

\section{Introduction}
The question of how best to allocate spectrum dates back over a century, and
the case in favour of selling bandwidth was first formalized in the academic literature 
as far back as 1959 by Ronald Coase~\cite{Coa59}. Over the past twenty years there have been
large number spectrum auctions world-wide and, amongst these,
the Simultaneous Multi-Round Auction (SMRA) and the Combinatorial Clock Auction (CCA)
have proved to be extremely successful.

Both of these multiple-item auctions are based upon the same underlying mechanism. 
At time $t$, each item $j$ has a price $p^t_j$. Given the current prices, each bidder $i$
then selects her preferred set $S^t_i$ of items. The price of any item that has excess demand then
rises in the next time period and the process is repeated. 
There are important differences between the two auctions however.
The SMRA uses {\em item bidding}, that is, the auctioneer views the selection 
of $S_i^t$ as a collection of bids, one bid for every item of $S_i^t$. 
It also utilizes the concept  of a {\em standing high bid} \cite{CK81}.
Any item (with a positive price) has a {\em provisional winner}. That bidder will win the item unless a higher
bid is received in a later round. If such a bid is received then the standing high bid is increased
and a new provisional winner assigned (chosen at random in the case of a tie).
Item bidding and standing high bids lead to a major drawback, the \emph{exposure problem}. 
Namely, a large set may be desired but such a bid may result in being allocated only a smaller undesirable subset.
If the bidder valuation functions satisfy the gross-substitutes property 
then this problem does not arise. Indeed, given truthful bidding, Milgrom~\cite{Mil00} showed that 
the SMRA will terminate in a Walrasian Equilibrium that maximizes social welfare; see also \cite{KC82,GS99} who studied
a similar auction mechanism. The exposure problem is also absent when the bidder valuation functions 
are submodular. In that case, Fu et al.~\cite{FKL12} show that the final allocation, whilst not necessarily a
Walrasiam Equilibrium, does provide at least half of the optimal social welfare.\footnote{Their proof is not for the SMRA, but it can be adapted to apply there.} For these classes of valuation function, the SMRA is {\em individual rational}
in every time period. That is, if bidder $i$ is provisionally allocated set $S$ at round $t$ then the value of $S$ to $i$ is 
at least the price of $S$.

For broader classes of valuation function that permit {\em complementarities}, though,
the exposure problem does arise under the SMRA.
This is a practical issue because in spectrum auctions bidder valuation functions 
typically do exhibit complementarities. The CCA~\cite{PRR03} was designed to deal with such complementarities.
Specifically, the CCA uses {\em package bidding} rather than item bidding. 
A package bid is an all-or-nothing bid. Consequently, a bidder cannot be allocated a subset of her bid; in particular,
a bidder cannot be allocated an undesirable subset. 
Unfortunately, the basic CCA mechanism cannot provide for non-trivial approximate welfare guarantees,
even for auctions with additive valuation functions and a small number of bidders and items~\cite{BCH15}.
It is perhaps surprising, then, that a minor adjustment to the CCA mechanism leads to good welfare guarantees 
for {\em any} class of  valuation function. 
Specifically, if bid increments are made proportional to excess demand the welfare of the
CCA is within an $O(k^2\cdot \log n \log^2 m)$ factor of the optimal welfare~\cite{BCH15}.
Here $n$ is the number of items, $m$ is the number of bidders and $k$ is the maximum 
cardinality of a set desired by the bidders. The fact that the CCA
can generate high welfare for general valuation functions motivates the work in this paper.
Is it possible that the SMRA also performs well with general valuation functions?

\subsection{Our Results}
The short answer to the question posed above is {\sc no}, the SMRA
cannot guarantee high social welfare for valuation functions that exhibit
complementarities (see Section~\ref{sec:bad-smra}).

It turns out however that we can quantify precisely the welfare guarantee 
in terms of the magnitude of the complementarities exhibited by the valuation function.
To explain this we require a few definitions. 
Each bidder $i \in B$ has value $v_{i}(S)$ for any set of items $S\subseteq \Omega$. 
The valuation function $v_i()$ is monotonically non-decreasing (free-disposal).
Each bidder has a quasi-linear utility, that is, 
its \emph{utility} for a set $S$ is $v_{i}(S)-p(S)$, where $p(S)$ is the price of $S$. 
The social welfare of an allocation $\mathcal{S}=\{S_{1},\ldots, S_{n}\}$, 
where the $S_{i}$ are pairwise-disjoint subsets of the items, 
is $\omega(\mathcal{S}) = \sum_{i}v_{i}(S_{i})$.
Next, to quantify the extent of complementarities, let the \emph{degree of submodularity} \cite{AV14} of a function $f$ be
\[ \mathcal{D}(f) = \min_{x \in I} \min_{A,B : A \subset B} \frac{f(A \cup x) - f(A)}{f(B \cup x) - f(B)} \]
Note that $f$ is submodular if and only if $\mathcal{D}(f) \geq 1$. 
We say that $f$ 
is \emph{$\alpha$ near-submodular} if $\mathcal{D}(f) \geq \frac{1}{\alpha}$.
A similar concept to near-submodularity, called {\em bounded complementarity}, is introduced 
by Lehman et al. \cite{LLN06}. 

The parameter $\alpha$ turns out to be key in
explaining the performance of the SMRA. To explain this we require
one more concept. 
We say that a bidder $i \in B$ is $\lambda$-{\em individually rational} 
if $\lambda \cdot v_i(S_i^t) \geq p(S_i^t)$ in each round $t$.
Note that if $\lambda=1$ then we have individual rationality.
We say that an auction mechanism is $\lambda$-{\em individually rational} if every bidder
is $\lambda$-individually rational.
We then prove in Section \ref{sec:truthful}:
 
\begin{theorem}\label{thm:constantfactor}
If bidders have $\alpha$ near-submodular valuations then, under (conditional) truthful bidding~\footnote{A detailed discussion on truthful and conditional truthful biddings will follow in Section~\ref{sec:truthful-smra}.},\\
(i) The SMRA outputs an allocation $\mathcal{S}$ with $\omega(S)\ge \frac{1}{1+\alpha}\cdot \omega(\mathcal{S}^*)$ where
$\mathcal{S}^*$ is the optimal allocation.\\
(ii) The auction is $\alpha$-individually rational.
\end{theorem}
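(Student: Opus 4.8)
The plan is to establish part (ii) first, since the price bound it yields drives part (i); the real work is in the individual-rationality estimate, which is where near-submodularity enters. Fix a bidder $i$ and a round $t$, let $A=\{j_1,\dots,j_k\}$ be the set provisionally allocated to $i$ at round $t$, and let $p$ denote the round-$t$ prices. For each $j\in A$ let $\tau_j\le t$ be the \emph{last} round at which $i$ placed a new bid on $j$; because $i$ still holds $j$ at round $t$, no competing bid lands on $j$ after $\tau_j$, so the price $i$ committed for $j$ equals its current price $p_j$. Relabel so that $\tau_{j_1}\le\cdots\le\tau_{j_k}$ (ties broken arbitrarily) and write $P_\ell=\{j_1,\dots,j_\ell\}$, $P_0=\emptyset$.

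I claim that when $i$ re-acquires $j_\ell$ at round $\tau_{j_\ell}$, her selected set $D_\ell$ that round contains $P_\ell$. Indeed $j_\ell\in D_\ell$ since $i$ bids on it; each $j_m$ with $m<\ell$ acquired strictly earlier is held by $i$ throughout $[\tau_{j_m},\tau_{j_\ell}]$, hence under \emph{conditional} truthful bidding lies in $D_\ell$; and each $j_m$ with $m<\ell$ acquired in the very same round trivially lies in $D_\ell$. Moreover $j_\ell$ is not held by $i$ just before the bid, so $D_\ell\setminus j_\ell$ is a feasible (conditional) report, and optimality of $D_\ell$ gives $v_i(D_\ell)-v_i(D_\ell\setminus j_\ell)\ge p_{j_\ell}$. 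Apply $\mathcal D(v_i)\ge 1/\alpha$ to the nested pair $P_{\ell-1}\subseteq D_\ell\setminus j_\ell$ with element $j_\ell$ to get $v_i(D_\ell)-v_i(D_\ell\setminus j_\ell)\le \alpha\bigl(v_i(P_\ell)-v_i(P_{\ell-1})\bigr)$. Summing over $\ell$ and telescoping (with $v_i(\emptyset)=0$),
\[
p(A)=\sum_{\ell=1}^{k} p_{j_\ell}\ \le\ \alpha\sum_{\ell=1}^{k}\bigl(v_i(P_\ell)-v_i(P_{\ell-1})\bigr)=\alpha\, v_i(A),
\]
which is exactly $\alpha$-individual rationality, part (ii).

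For part (i), let $\mathcal S=(S_1,\dots,S_m)$ and $p$ be the allocation and item prices at termination and $\mathcal S^*=(S_1^*,\dots,S_m^*)$ an optimal allocation. Since no new bid is submitted in the terminal round, each bidder's selected set $D_i$ satisfies $D_i\subseteq S_i$ (an item of $D_i\setminus S_i$ would be bid on), while conditional truthful bidding forces $D_i\supseteq S_i$; hence $D_i=S_i$, and optimality of the report $S_i$ against the feasible report $S_i\cup S_i^*$ gives $v_i(S_i)-p(S_i)\ge v_i(S_i\cup S_i^*)-p(S_i\cup S_i^*)$. As $p$ is additive over items, $p(S_i\cup S_i^*)=p(S_i)+p(S_i^*\setminus S_i)$; combined with $v_i(S_i\cup S_i^*)\ge v_i(S_i^*)$ this rearranges to $v_i(S_i^*)\le v_i(S_i)+p(S_i^*)$. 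Summing over $i$, using disjointness of the $S_i^*$ and that unallocated items have price $0$ (so $\sum_i p(S_i^*)\le\sum_{j}p_j=\sum_i p(S_i)$) together with part (ii),
\[
\omega(\mathcal S^*)\ \le\ \sum_i v_i(S_i)+\sum_i p(S_i^*)\ \le\ \omega(\mathcal S)+\sum_i p(S_i)\ \le\ (1+\alpha)\,\omega(\mathcal S),
\]
so $\omega(\mathcal S)\ge\frac1{1+\alpha}\,\omega(\mathcal S^*)$.

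I expect the individual-rationality argument to be the main obstacle: one must decompose $A$ along the order in which its items were last won, check that this makes every prefix $P_{\ell-1}$ sit inside $D_\ell\setminus j_\ell$ (the point at which \emph{unconditional} truthful bidding fails, since a bidder may want to drop a held item — hence the conditional variant), and then invoke near-submodularity in the correct direction, bounding the marginal of $j_\ell$ over the late set $D_\ell\setminus j_\ell$ by $\alpha$ times its marginal over the prefix $P_{\ell-1}$. The remaining ingredients — the effect of the bid increment $\delta$, nonzero opening prices, and random tie-breaking among provisional winners — are routine and perturb the bounds only by lower-order terms.
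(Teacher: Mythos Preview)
Your proof is correct, and your welfare argument (part (i)) is essentially identical to the paper's: at termination the empty conditional bid dominates $S_i^*\setminus S_i$, which yields $v_i(S_i^*)\le v_i(S_i)+p(S_i^*\setminus S_i)$; summing and using that every positively-priced item is sold plus $\alpha$-individual rationality gives the $(1+\alpha)$ factor.

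Your individual-rationality argument (part (ii)), however, takes a genuinely different route. The paper proceeds by induction on the round index, maintaining the stronger invariant that \emph{every subset} $S'\subseteq S_i^t$ satisfies $\alpha\,v_i(S')\ge p^t(S')$. The inductive step adds the items of the new conditional bid $T_i^t$ one at a time: for each $x\in T_i^t$, truthful optimality of $T_i^t$ against $T_i^t\setminus x$ gives $v_i(S_i^t\cup T_i^t)-v_i(S_i^t\cup T_i^t\setminus x)\ge p^{t+1}(x)$, and near-submodularity transfers this marginal down to any smaller base set. You instead work directly at a fixed round, order the items of the provisional set $A$ by their \emph{last} acquisition time, observe that each prefix $P_{\ell-1}$ is necessarily contained in the demand set $D_\ell$ at round $\tau_{j_\ell}$ (precisely because ``last acquisition'' forces the earlier items to already be held), and telescope. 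Both arguments rest on the same two ingredients --- a single-item-drop comparison yielding a marginal-value bound on the price, plus near-submodularity to move that marginal to a smaller base set --- but they organise them differently. The paper's invariant is stronger than needed here (only the full-set bound enters the welfare proof) and is set up to be reused verbatim for locally-optimal bidding; your time-ordered decomposition is more direct for the stated theorem and makes explicit why \emph{conditional} (as opposed to unconditional) truthfulness is exactly what is required, since without it held items need not appear in $D_\ell$ and the inclusion $P_{\ell-1}\subseteq D_\ell\setminus j_\ell$ would fail.
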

%Our proof relies on a relaxed version of conditional equilibrium introduced in~\cite{FKL12}.
The bi-criteria guarantees in Theorem~\ref{thm:constantfactor} are (almost) tight. There are examples
with $\alpha$ near-submodular valuations where the SMRA is only $\alpha$-individually rational
and the welfare guarantee tends to $\frac{1}{1+\alpha}$; see Section~\ref{sec:tight-truthful}.
Despite the fact that SMRA has arbitrarily poor welfare guarantees, it seems to perform
very well in practice. Theorem~\ref{thm:constantfactor} provides an explanation for this, and confirms
empirical results, since complementarities exist but are typically bounded in magnitude in most spectrum auctions. 
Indeed, the SMRA has been proposed for auctions where valuation functions 
have weak complementarities~\cite{AC11}.

There are, however, two major drawbacks inherent in Theorem~\ref{thm:constantfactor}.
The first drawback is that it relies upon {\em truthful bidding}, that is, in each round the bidder selects the feasible set that
maximizes utility. But, as we explain in Section \ref{sec:truthful-smra}, there are many reasons why a
bidder will not bid truthfully in the SMRA.
One of these reasons is that, in a spectrum auction, a bidder may not even know its own valuation function~\cite{Cra13}.
Bidders typically can however make comparisons between similar sets. Thus, a natural method by which a bidder can
select a bid is via local improvement. 

We show, in Section~\ref{sec:locallyopt}, that local improvement leads to similar guarantees as truthful bidding 
(albeit with an additional $\alpha$ factor in
the denominator for the welfare guarantee).
\begin{theorem}\label{thm:localimprov}
If bidders have $\alpha$ near-submodular valuations then, under (conditional) local improvement bidding,\\
(i) The SMRA outputs an allocation $\mathcal{S}$ with 
$\omega(S)\ge \frac{1}{1+\alpha^2}\cdot \omega(\mathcal{S}^*)$ where
$\mathcal{S}^*$ is the optimal allocation.\\
(ii) The auction is $\alpha$-individually rational.
\end{theorem}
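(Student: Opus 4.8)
The plan is to bootstrap the analysis of Theorem~\ref{thm:constantfactor} rather than redo it from scratch. First I would recall why truthful bidding gives the $(1+\alpha)$ bound: at termination, no bidder wants to move from her allocated set $S_i$ to any feasible alternative, so for every item $j$ in the optimal bundle $S^*_i$ that bidder $i$ does \emph{not} hold, the price $p_j$ must be at least the marginal value $v_i(S_i + j) - v_i(S_i)$ (otherwise she would have bid on $j$). Summing marginals and using $\alpha$ near-submodularity to relate the marginal of $j$ relative to $S_i$ to the marginal of $j$ relative to a subset of $S^*_i$, one charges $\omega(\mathcal S^*)$ against $\omega(\mathcal S)$ plus the total revenue, and the revenue is in turn bounded by $\alpha\,\omega(\mathcal S)$ via the $\alpha$-individual-rationality of the final allocation. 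That is the skeleton I would reuse.

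Next I would isolate exactly where truthfulness was used and replace it with the weaker local-improvement hypothesis. Under (conditional) local improvement bidding, the termination condition is not ``no global deviation is profitable'' but ``no \emph{local} step is profitable'' — i.e., bidder $i$ does not wish to add a single item $j$ to (or swap a single item into) her current provisional set. This is precisely enough to conclude, as before, that $p_j \ge v_i(S_i + j) - v_i(S_i)$ for each $j \in S^*_i \setminus S_i$, since adding $j$ is a single local move. So the ``price dominates missing marginals'' inequality survives verbatim; the first half of the welfare argument goes through unchanged and still yields $\omega(\mathcal S^*) \le \omega(\mathcal S) + (\text{total final price})$ after the near-submodularity substitution.

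The extra $\alpha$ factor enters in the second half, bounding the total final price. Under truthful bidding the final allocation is itself utility-maximizing among single-item perturbations, which gave $p(S_i) \le \alpha\, v_i(S_i)$ directly. Under mere local improvement we only know that no local \emph{removal} is profitable, i.e. $p_j \le v_i(S_i) - v_i(S_i - j)$ for each $j \in S_i$; summing over $j \in S_i$ and applying $\alpha$ near-submodularity to lower-bound each such difference-of-marginals term relative to the empty set telescopes to $p(S_i) \le \alpha \, v_i(S_i)$ again — wait, this already gives $\alpha$, not $\alpha^2$. The correct accounting is that near-submodularity must be invoked \emph{twice}: once to pass from marginals-at-$S_i$ to marginals-at-subsets-of-$S^*_i$ in the welfare charging step, and once more to control prices, because the only local guarantee available is on single-step differences rather than on the global price of a bundle, and relating $p(S_i)$ to $v_i(S_i)$ through a chain of single-item marginals loses a further factor $\alpha$ compared with the truthful case. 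Carefully tracking the two invocations yields $\omega(\mathcal S^*) \le (1+\alpha^2)\,\omega(\mathcal S)$, which is (i); and the price bound $p(S_i^t) \le \alpha\, v_i(S_i^t)$ at every round $t$ — not just at termination — follows from the local no-profitable-removal property holding throughout the auction, giving (ii). I expect the main obstacle to be exactly this bookkeeping: pinning down precisely which inequalities are guaranteed by a \emph{local} (rather than global) optimality condition at intermediate rounds versus at termination, and verifying that the two separate uses of the near-submodularity parameter genuinely compound to $\alpha^2$ in the welfare bound while the individual-rationality loss stays at $\alpha$.
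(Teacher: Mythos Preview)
Your overall architecture matches the paper's: prove $\alpha$-individual rationality first, then in the welfare step use local optimality at termination to bound each missing marginal $v_i(S_i\cup x_j)-v_i(S_i)$ by $p(x_j)$, apply $\alpha$-near-submodularity once to collapse the telescoping sum $v_i(S_i\cup S_i^*)-v_i(S_i)$ into those single-item marginals, and finally invoke $\alpha$-IR to bound $\sum_i p(S_i)$ by $\alpha\sum_i v_i(S_i)$. That is exactly Lemma~\ref{lem:ce-local}.

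Two places in your write-up do not line up with the mechanism, though. First, the sentence ``the first half \dots\ still yields $\omega(\mathcal S^*)\le \omega(\mathcal S)+(\text{total final price})$'' is the wrong intermediate inequality: in the truthful proof that bound needed the \emph{global} comparison $v_i(S_i\cup(S_i^*\setminus S_i))-p(S_i^*\setminus S_i)\le v_i(S_i)$, which local optimality does \emph{not} supply. Under local bidding you only get the single-item inequalities, and converting them into a bound on $v_i(S_i^*)$ costs an $\alpha$ right there, giving $\omega(\mathcal S^*)\le \omega(\mathcal S)+\alpha\cdot(\text{total price})$. You eventually land on this, but the ``goes through unchanged'' claim is the source of your mid-paragraph confusion about where the second $\alpha$ lives.

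Second, and more substantively, your sketched route to $\alpha$-IR via ``no local removal is profitable, i.e.\ $p_j\le v_i(S_i)-v_i(S_i-j)$ for $j\in S_i$'' is not available in the SMRA: a bidder cannot drop items from her provisional set $S_i^t$ (the mechanism forces her to rebid on all of $S_i^t$), so no such removal inequality holds at termination. The paper instead reuses the inductive argument of Theorem~\ref{thm:smra-truthful} over rounds: the only place truthfulness was used there is the inequality $v_i(S_i^t\cup T_i^t)-v_i(S_i^t\cup T_i^t\setminus x)\ge p^{t+1}(x)$ for $x\in T_i^t$, and that follows from local optimality of the \emph{conditional bid} $T_i^t$ (dropping one item from $T_i^t$ is a local move). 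Your final clause ``throughout the auction'' is pointing in the right direction, but the object to which the no-removal property applies is $T_i^t$, not $S_i$, and the argument is genuinely inductive rather than a one-shot sum at the end.
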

 Again, the bounds in Theorem~\ref{thm:localimprov} are (almost) tight.
 
The second drawback is that Theorem~\ref{thm:constantfactor} shows that the 
SMRA is not individually rational.
That is, it may produce outcomes which give negative utility to some bidders.
Consequently, in Section~\ref{sec:ir} we provide a detailed study of what bidding strategies
are required to ensure the individual rationality of the SMRA, and what are the consequences
for welfare when such strategies are used. Towards this end,
we characterize the individual rationality of the SMRA in terms of {\em secure bidding}.
We then prove, in Section~\ref{sec:welfare-secure}, that secure bidding has a 
good welfare guarantees, provided the bidders make profit maximizing secure bids.
\begin{theorem}\label{thm:profitmax}
If bidders have $\alpha$ near-submodular valuations then, under (conditional) profit maximizing secure bidding,
the SMRA outputs an allocation $\mathcal{S}$ with 
$\omega(S)\ge \frac{1}{1+\alpha}\cdot \omega(\mathcal{S}^*)$ where
$\mathcal{S}^*$ is the optimal allocation.
\end{theorem}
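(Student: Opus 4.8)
The plan is to follow the template of the proof of Theorem~\ref{thm:constantfactor}, replacing the two uses of truthful bidding by arguments that invoke only the profit-maximizing secure property and the individual rationality that secure bidding guarantees. Fix the final allocation $\mathcal{S} = \{S_i\}_{i \in B}$ and the final prices $\{p_j\}_{j \in \Omega}$, write $p(R) = \sum_{j \in R} p_j$, and let $\mathcal{S}^* = \{S_i^*\}_{i \in B}$ be an optimal allocation. The input we take from Section~\ref{sec:ir} is that secure bidding makes the SMRA individually rational, so at termination $v_i(S_i) \ge p(S_i)$ for every bidder $i$.

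The heart of the argument is a per-item marginal bound at termination: for every bidder $i$ and item $j \notin S_i$,
\[ v_i(S_i \cup j) - v_i(S_i) \ \le\ p_j, \]
with the bid increment absorbed exactly as in Sections~\ref{sec:truthful} and~\ref{sec:locallyopt}. To prove it, consider the candidate bid in which $i$ adds $j$ to its standing high bids; the only additional outcome this risks is that $i$ also wins $j$. Either the candidate is secure, and then, since a profit-maximizing secure bidder places no new bid at termination, adding $j$ cannot strictly increase $i$'s profit, giving $v_i(S_i \cup j) - v_i(S_i) \le p_j$. Or the candidate is not secure; as $v_i(S_i) \ge p(S_i)$ already rules out a violation on the ``win nothing new'' outcome, insecurity forces $v_i(S_i \cup j) < p(S_i) + p_j$, whence $v_i(S_i \cup j) - v_i(S_i) < p(S_i) + p_j - v_i(S_i) \le p_j$, again using $v_i(S_i) \ge p(S_i)$. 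I expect this case split to be the main obstacle, in particular the observation that insecurity of adding $j$, combined with the individual rationality of $S_i$, still pins the marginal of $j$ down to at most $p_j$; after this the remaining steps are routine.

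Now lift the per-item bound to sets. By monotonicity $v_i(S_i^*) \le v_i(S_i \cup S_i^*)$; writing $S_i^* \setminus S_i = \{j_1, \dots, j_k\}$ and telescoping,
\[ v_i(S_i \cup S_i^*) - v_i(S_i) \ =\ \sum_{\ell = 1}^{k} \Bigl( v_i\bigl(S_i \cup \{j_1,\dots,j_\ell\}\bigr) - v_i\bigl(S_i \cup \{j_1,\dots,j_{\ell - 1}\}\bigr) \Bigr), \]
and since the $\ell$-th summand is the marginal of $j_\ell$ over a superset of $S_i$, $\alpha$ near-submodularity bounds it by $\alpha\bigl(v_i(S_i \cup j_\ell) - v_i(S_i)\bigr) \le \alpha\, p_{j_\ell}$. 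Hence $v_i(S_i^*) \le v_i(S_i) + \alpha\, p(S_i^* \setminus S_i) \le v_i(S_i) + \alpha\, p(S_i^*)$. Summing over $i$, and using that the sets $S_i^*$ are pairwise disjoint, $\omega(\mathcal{S}^*) \le \omega(\mathcal{S}) + \alpha\, p(\Omega)$. Finally, at termination every item of positive price has a provisional winner, so $p(\Omega) = \sum_i p(S_i)$, and individual rationality gives $p(S_i) \le v_i(S_i)$; thus $\alpha\, p(\Omega) \le \alpha\, \omega(\mathcal{S})$ and $\omega(\mathcal{S}^*) \le (1 + \alpha)\, \omega(\mathcal{S})$, as claimed. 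It is worth noting the contrast with Theorem~\ref{thm:constantfactor}: there the welfare decomposition costs no factor $\alpha$ because a truthful bidder can target $S_i \cup S_i^*$ directly, and the single $\alpha$ enters through $\alpha$-individual rationality; here secure bidding restores full individual rationality, but since $S_i \cup S_i^*$ need not be a secure bid we must pass through the per-item bound and incur the factor $\alpha$ in the near-submodular step instead, so the overall guarantee $1 + \alpha$ is unchanged.
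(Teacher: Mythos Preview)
Your overall strategy matches the paper's, and the final summation is correct. The gap is in your Case~2 of the per-item bound. Security of the conditional bid $\{j\}$ requires $v_i(S') \ge p(S')$ for \emph{every} $S' \subseteq S_i \cup \{j\}$, not only for the two ``outcomes'' $S' = S_i$ and $S' = S_i \cup \{j\}$. Since $S_i$ was itself a secure bid, every subset of $S_i$ is already fine; but insecurity of $\{j\}$ only tells you there exists some $Q \subseteq S_i$ with $v_i(Q \cup j) < p(Q) + p_j$, and $Q$ need not equal $S_i$. Combining this with $v_i(Q) \ge p(Q)$ gives $v_i(Q \cup j) - v_i(Q) < p_j$, a bound on the marginal of $j$ over $Q$, not over $S_i$. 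For $\alpha > 1$ the marginal over $S_i$ can genuinely exceed $p_j$: take $v_i(S_i) = p(S_i)$, $v_i(\{j\}) < p_j$ (so $\{j\}$ is insecure via $Q = \emptyset$), yet $v_i(S_i \cup j) - v_i(S_i)$ large; this is compatible with $\alpha$-near-submodularity for suitably large $\alpha$.

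The fix, which is exactly what the paper does, is to keep the witnessing subset $Q_{ij} \subseteq S_i$ (taking $Q_{ij} = S_i$ in your Case~1), record $v_i(Q_{ij} \cup j) - v_i(Q_{ij}) \le p_j$, and then in the telescoping apply $\alpha$-near-submodularity with $Q_{ij}$ as the smaller set rather than $S_i$:
\[
v_i\bigl(S_i \cup \{j_1,\dots,j_\ell\}\bigr) - v_i\bigl(S_i \cup \{j_1,\dots,j_{\ell-1}\}\bigr)
\ \le\ \alpha\bigl(v_i(Q_{i,j_\ell} \cup j_\ell) - v_i(Q_{i,j_\ell})\bigr)
\ \le\ \alpha\, p_{j_\ell}.
\]
This recovers the $1+\alpha$ bound. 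If instead you first push the Case~2 bound up to the marginal over $S_i$ using near-submodularity, you pay a factor $\alpha$ there and a second factor $\alpha$ in the telescoping, ending only at $1+\alpha^2$.
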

Consequently, by bidding securely, it is possible to obtain the same welfare guarantees as truthful bidding
without the loss of individual rationality!

%we study the impact of individual rationality on the welfare of the CCA.
%First we show that the bidders' strategies that want to ensure the individual rationality
%essentially corresponds to strategies where bidders bids on sets such that no subset  is not 
%individually rational. Indeed since the SMRA uses item bidding, making a bid on a set that
%admits a subset whose price is larger than the value is extremely risky.

%Then we show that such strategies may lead to a bad welfare. 
%First we prove that if bidders' valuation functions are superadditive then the 
%welfare can be arbitrarily bad for \emph{any} bidders' strategy that ensures the
%individual rationality. 

%{\bf [ Explain more clearly the next comment...]}
%Then we prove that if bidders' are truthful and individually 
%rational then the welfare of the SMRA can be a square root of the optimal one.
%This last result contrast with Theorem~\ref{thm:constantfactor}.

\section{The Simultaneous Multiple-Round Ascending Auction}\label{sec:SMRA}

The SMRA was first proposed by Milgrom, Wilson and McAfee for the 1994 FCC spectrum auction.
It is an ascending price auction that simultaneously sells many items.
Let $B$ be a set of $n$ bidders and let $\Omega$ be a collection of $m$ items.
For each item $j\in \Omega$ the auction posits
an item-price $p_j^t$ at the start of round $t$. 
Moreover, the SMRA has a unique {\em standing high bidder} for each item with a positive price.
Specifically, at the start of round $t$, bidder $i$ is the standing high bidder for a set of items 
$S_i^t$; we call $S_i^t$ the {\em provisional (winning) set} for bidder~$i$. \vspace{5pt}

%Before giving a detailed description of the SMRA, let us introduce a few definitions.
%Each bidder $i \in B$ has value $v_{i}(S)$ for any set of items $S\subseteq \Omega$. 
%The {price $p(S)$} for a set of item $S$ is simply the sum of 
%prices for each item in~$S$. The \emph{utility} of bidder $i$ for set $S$ is $v_{i}(S)-p(S)$, where $p(S)$ is 
%the price of $S$. The SMRA outputs an allocation $\mathcal{S}=\{S_{1},\ldots, S_{n}\}$, 
%where the $S_{i}$ are pairwise-disjoint subsets of the items, 
%and a collection of prices $\{p_{j}\}_{j\in[m]}$. The \emph{social welfare} of an 
%allocation $\mathcal{S}$ is $\sum_{i}v_{i}(S_{i})$ and the \emph{revenue} 
%is $\sum_{i} p(S_{i})=\sum_{i}\sum_{j\in S_{i}}p_{j}$.

\noindent\underline{The SMRA mechanism}:
Initially $p^0_j=0$ for each item $j\in \Omega$, and $S_i^0=\emptyset$ for each bidder $i\in B$ and $t=0$.
%There is a price increment of $\epsilon$ and 
The auction then iterates over rounds as follows.  
In round $t$, bidder $i$ bids for a set $T_i^t\subseteq \Omega \setminus S^t_i$ under the assumption 
that the price of each item $j\in \Omega\setminus S^t_i$ is incremented to $p_j^t+\epsilon$. 
We call $T^t_i$ the {\em conditional bid} for $i$. The term conditional is used as the 
auction mechanism automatically assumes that bidder $i$ also makes a bid of price $p_j^t$ for every 
item $j\in S^t_i$ (recall, bidder $i$ is the provisional winner of the items $S^t_i$).

The item-prices and provisional sets are then updated. Take an item $j$ and suppose that
$j$ is in exactly $k$ of the conditional bids. If $k=0$ then no bidder has
placed a bid on item $j$ at the incremented price $p_j^t+\epsilon$. Thus we set
$p_j^{t+1}=p_j^t$ and the standing high bidder for $j$ remains the same, $i.e.$
if $j\in S^t_i$ then $j\in S^{t+1}_i$. On the other hand if $k>0$ (we say that
$j$ is in \emph{excess demand}) then at least
one bidder accepted the incremented price $p_j^t+\epsilon$. Thus we set
$p_j^{t+1}=p_j^t+\epsilon$. The mechanism then randomly selects a bidder $i$ 
amongst these $k$ bidders and places $j\in S^{t+1}_i$. Note that, 
in this case, the standing high bidder must change as the previous
standing high bidder was only assumed to bid the non-increment price $p_j^t$.

The mechanism then proceeds to the next round. The auction terminates 
when the conditional bids $T^t_i$ of all the bidders are empty, at which point
each bidder $i$ is permanently allocated her provisional set $S^t_i$ for
a price $\sum_{j\in S^t_i} p^t_i$. \vspace{5pt}

An extremely important property of the SMRA is that the use of
standing high bidders implies that every item with a positive price is
sold.
\begin{observation}
In an SMRA auction, every item with a positive price is sold. \qed
\end{observation}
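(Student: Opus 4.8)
The plan is to establish, by induction on the round number $t$, the invariant that every item $j$ with $p_j^t > 0$ lies in the provisional set $S_i^t$ of some bidder $i$; in other words, every positively priced item has a standing high bidder. Granting this invariant, the Observation follows immediately: when the auction terminates at some round $t$, each bidder $i$ is permanently allocated her provisional set $S_i^t$, so any item $j$ with $p_j^t > 0$ is allocated to its standing high bidder, and hence sold.

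For the base case, at $t=0$ we have $p_j^0 = 0$ for every item $j$, so the invariant holds vacuously. For the inductive step, assume the invariant at round $t$ and consider an item $j$ with $p_j^{t+1} > 0$. Let $k$ be the number of conditional bids $T_\ell^t$ containing $j$ (together with the automatic non-increment bid of its provisional winner, if any). If $k = 0$, the mechanism sets $p_j^{t+1} = p_j^t$, so $p_j^t = p_j^{t+1} > 0$; by the induction hypothesis $j \in S_i^t$ for some $i$, and the $k=0$ rule keeps the same standing high bidder, so $j \in S_i^{t+1}$. If $k > 0$, the item is in excess demand, its price rises to $p_j^{t+1} = p_j^t + \epsilon$, and the mechanism explicitly selects one of the $k$ bidders $i$ and places $j \in S_i^{t+1}$. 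In either case $j$ has a standing high bidder at round $t+1$, completing the induction.

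There is essentially no difficult step here: the argument is a direct unwinding of the price-and-provisional-set update rule. The only point requiring a moment's care is the first round in which a given item acquires a positive price — but this is precisely the $k>0$ branch, where a standing high bidder is assigned by construction. Equivalently, one can phrase the whole argument as the observation that an item's price is positive if and only if it has been in excess demand at least once, and from that round onward it always has a standing high bidder; this makes the monotone nature of the property transparent and the Observation falls out at termination.
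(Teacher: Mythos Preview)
Your proof is correct and is exactly the natural formalization of the Observation; the paper itself gives no argument beyond the \qed, treating the claim as immediate from the update rule (an item's price can only become positive in a round where it is in excess demand, at which point a standing high bidder is assigned and thereafter never removed). One small wording slip: in the mechanism, $k$ counts only the conditional bids $T_\ell^t$ containing $j$, not the provisional winner's implicit non-increment bid, but this does not affect your case split.
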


\subsection{Truthful Bidding in the SMRA}\label{sec:truthful-smra}
A key factor in determining the practical success of the auction is accurate
price discovery (see, for example Cramton \cite{Cra06,Cra13}).
This, in turn, relies upon bidding that is truthful or, at least,
approximately truthful. There are two pertinent issues here. 
Firstly, is the SMRA mechanism compatible with truthful bidding? Specifically,
the use of conditional bidding implicitly implies that bidders are
forced to rebid on their provisional sets.
However, suppose that $T^t_i$ is the optimal conditional bid, that is
$$T^t_i =\mathrm{argmax}_{T\subseteq \Omega\setminus S^t_i} \ 
\left(v_i(T\cup S^t_i) - v_i(S^t_i) - \sum_{j\in T} (p^t_j+\epsilon)\right)$$
It need not be the case that  the implicit bid $S^t_i\cup T^t_i$ is truthful. In particular, we may have
$$S^t_i\cup T^t_i  \neq \mathrm{argmax}_{T\subseteq \Omega} \ 
\left(v_i(T) - \sum_{j\in T\cap S^t_i}  p^t_j - \sum_{j\in T\setminus S^t_i} (p^t_j+\epsilon)\right)$$
Recall, here, that bidder $i$ has a personalized set of prices: 
$\left( ({\bf p})_{S^t_i}, ({\bf p}+\epsilon \cdot {\bf 1})_{\Omega\setminus S^t_i}\right)$.
Indeed, at round $t$, bidder $i$ has an $\epsilon$ discount on the prices of $S_i^t$.

Interestingly, truthful bidding is compatible with the SMRA (for any price trajectories) precisely if the
valuation function satisfies the gross substitutes property \cite{Mil00}. The gross substitutes property\footnote{A valuation function satisfies the 
{\em gross substitutes property} if, given any set of prices, increasing the price of some goods does not decrease demand
for another good.}
 was defined by Kelso and Crawford \cite{KC82} and used by them to prove 
the existence of Walrasian equilibrium. Moreover, with gross substitutes, 
the SMRA will converge to a Walrasian equilibrium; furthermore such an equilibrium will maximize
social welfare (given negligible price increments) -- see Milgrom \cite{Mil00,Mil04}.

Secondly, even if truthful bidding is compatible with the SMRA, it is unlikely that the bidders 
will actually bid truthfully. 
For example, in bandwidth auctions, firms typically have ranked bandwidth targets and budget constraints that
are more important than profit-maximization.
Moreover, the valuation function is often not known in advance, rather it is ``learned" as the auction proceeds.
Regardless, the SMRA and the CCA do both incorporate a set of bidding activity rules to encourage truthful bidding.
In the CCA these include revealed preference bidding rules that are difficult to game~\cite{ACM06,BV15}.
However, the bidding rules in the SMRA are weaker and strategic bidding is common -- examples
include demand reduction, parking, and hold-up strategies~\cite{Cra13}.

Consequently, as well as examining truthful (optimal conditional) bidding, we will
examine the natural strategy of local improvement bidding that consists of 
attempting to add one item, delete one item, or replace one item in the current proposed solution.
 Gul and Stacchetti \cite{GS99} prove that this local improvement method finds an optimal demand set, 
 given any set of prices, if the valuation function has the gross substitutes property.
 We examine the quality of outcomes, for more general valuation functions, when this local search 
 method is used in the SMRA in Section~\ref{sec:locallyopt}.
 
From now on, we concentrate on conditional bidding. Thus we will omit the term conditional when we refer to conditional
truthful bidding or conditional profit maximizing bidding.

\subsection{A Bad Example}\label{sec:bad-smra}
Unfortunately, the welfare ratio of the SMRA can be arbitrarily bad if the valuations exhibit complementarities.
This is the case even for auctions with just two bidders $\{1,2\}$ and two items $\{a,b \}$. 
Suppose both bidders have value $1$ for each individual, but value the pair of items at $M$, for some large value $M$.
Clearly, the optimal allocation has welfare $M$ and consists in allocating both items to one of the bidders.
However, the allocation of the SMRA has welfare $2$ with probability $\frac 12$. 
Indeed, the provisional set at round $t+1$ is the complement
of the provisional set at round $t$ since the conditional bid 
of each bidder will be the complement of her provisional set.
So the final allocation (which occurs when both prices exceed $\frac{M}{2}$) just depends on the allocation 
at the end of the first round; this allocates one item to each bidder with probability $\frac 12$ since it is randomized.
By further increasing the number of identical bidders, it can be shown that the probability of the low welfare outcome
tends to one. 

This simple example has an important implication.
Note that, at any round, both bidders bid on both items until the end of the auction. 
Thus, excess demand is constant in each round.
So, even if price increments depend on the excess demand, one cannot achieve a better welfare ratio . 
This contrasts sharply with the behavior of the CCA whose welfare ratio becomes
polylogarithmic if the price increment is allowed to depend upon the excess demand
and the size of the demand sets are bounded~\cite{BCH15}.

\section{Bicriteria Guarantees for the SMRA under Truthful Bidding}\label{sec:truthful}
We now prove Theorem~\ref{thm:constantfactor} and show
that, under truthful bidding, the worst case welfare and rationality guarantees
are dependent upon the degree of submodularity in the bidder valuation functions.
First, in Section~\ref{sec:rational-truthful} we prove the individual rationality guarantee.
Then, in Section~\ref{sec:welfare-truthful} we prove the welfare guarantee. Finally,
in Section~\ref{sec:tight-truthful} we show that these guarantees are (almost) tight.

\subsection{A Rationality Guarantee}\label{sec:rational-truthful}
\begin{theorem}\label{thm:smra-truthful}
Given $\alpha$-near-submodular truthful bidders, the SMRA outputs an $\alpha$-individually rational
allocation.
\end{theorem}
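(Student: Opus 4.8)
The plan is to prove by induction on the round $t$ that at the start of every round, each bidder $i$ satisfies $\alpha \cdot v_i(S_i^t) \ge p(S_i^t)$, where $p(S_i^t) = \sum_{j \in S_i^t} p_j^t$. The base case $t=0$ is trivial since $S_i^0 = \emptyset$ and all prices are $0$. For the inductive step, I would fix a bidder $i$ and track how her provisional set and its price change from round $t$ to round $t+1$. The set $S_i^{t+1}$ consists of some items retained from $S_i^t$ together with possibly some newly-won items, all of which must have come from the conditional bid $T_i^t$. The key point is that truthful (optimal conditional) bidding means $T_i^t$ maximizes $v_i(T \cup S_i^t) - v_i(S_i^t) - \sum_{j \in T}(p_j^t + \epsilon)$ over $T \subseteq \Omega \setminus S_i^t$; in particular the empty set is a feasible choice, so the optimal value is nonnegative, giving $v_i(S_i^t \cup T_i^t) - v_i(S_i^t) \ge \sum_{j \in T_i^t}(p_j^t + \epsilon) = p(T_i^t)$ (using that the new price of items in $T_i^t$ is $p_j^t + \epsilon$).

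Next I would relate $v_i(S_i^{t+1})$ to $v_i(S_i^t \cup T_i^t)$ and $v_i(S_i^t)$. Since $S_i^{t+1} \subseteq S_i^t \cup T_i^t$ and the items lost are items that were in $S_i^t$ (bidder $i$ never loses an item she just bid on and won, and she only bids to add items not in $S_i^t$), monotonicity alone is not enough — I need to account for the value of the items dropped from $S_i^t$. This is exactly where $\alpha$-near-submodularity enters: the marginal contribution of any retained-or-new items, added on top of $S_i^{t+1} \cap S_i^t$, is at least $\frac{1}{\alpha}$ times their marginal contribution added on top of the larger set $S_i^t$ (or $S_i^t \cup T_i^t$ minus those items). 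Concretely, I expect the argument to show $v_i(S_i^{t+1}) \ge v_i(S_i^t) + \frac{1}{\alpha}\big(v_i(S_i^t \cup T_i^t) - v_i(S_i^t \cup (S_i^{t+1} \setminus \text{new}))\big)$ type inequality, or more cleanly, that $v_i(S_i^{t+1}) - v_i(S_i^{t+1} \cap S_i^t) \ge \frac{1}{\alpha}\big(v_i(S_i^t \cup T_i^t) - v_i(S_i^t)\big) \ge \frac{1}{\alpha} p(T_i^t)$, applying the definition of $\mathcal{D}(v_i)$ item-by-item along a chain.

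Finally I would assemble the price bound. We have $p(S_i^{t+1}) = p(S_i^{t+1} \cap S_i^t) + p(S_i^{t+1} \setminus S_i^t)$, where the first term is at most $p(S_i^t)$ (prices of retained items rose by at most... actually retained items' prices are unchanged from $i$'s perspective since $i$ was the standing high bidder, or rose only if... I must be careful here) and the second term is at most $p(T_i^t \text{ portion won})$, which is at most $p(T_i^t)$. Combining with the inductive hypothesis $\alpha \cdot v_i(S_i^{t+1} \cap S_i^t) \ge p(S_i^{t+1} \cap S_i^t)$ and the near-submodularity bound $\alpha\big(v_i(S_i^{t+1}) - v_i(S_i^{t+1}\cap S_i^t)\big) \ge p(T_i^t) \ge p(S_i^{t+1}\setminus S_i^t)$, adding the two gives $\alpha \cdot v_i(S_i^{t+1}) \ge p(S_i^{t+1})$, completing the induction.

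The main obstacle I anticipate is the bookkeeping around which items' prices change and by how much when the provisional set updates: a retained item's price may or may not have incremented depending on whether it was in excess demand, and an item bidder $i$ newly wins is charged $p_j^t + \epsilon$ exactly. Getting the decomposition of $S_i^{t+1}$ into the part overlapping $S_i^t$ (priced by the inductive hypothesis) and the genuinely new part (priced by the nonnegativity of the optimal conditional bid value, then discounted by $\frac{1}{\alpha}$ via near-submodularity) consistent — so that no price is double-counted and the near-submodularity chain is applied to a legitimate nested pair $A \subset B$ — is the delicate step. Everything else is monotonicity, the trivial feasibility of the empty conditional bid, and the definition of $\mathcal{D}(v_i) \ge \frac{1}{\alpha}$.
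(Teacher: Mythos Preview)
Your overall architecture (induction on the round, decompose $S_i^{t+1}$ into the retained part $A=S_i^{t+1}\cap S_i^t$ and the newly won part $N=S_i^{t+1}\setminus S_i^t\subseteq T_i^t$, bound the two price contributions separately) is exactly the paper's approach. However, as written there are two genuine gaps.

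\textbf{The inductive hypothesis is too weak.} You assume only $\alpha\cdot v_i(S_i^t)\ge p^t(S_i^t)$, yet in the inductive step you invoke ``the inductive hypothesis $\alpha\cdot v_i(S_i^{t+1}\cap S_i^t)\ge p(S_i^{t+1}\cap S_i^t)$''. Since bidder $i$ may lose items, $A=S_i^{t+1}\cap S_i^t$ can be a \emph{strict} subset of $S_i^t$, and the bound on $S_i^t$ does not imply the bound on $A$ (monotonicity of $v_i$ goes the wrong way). The paper fixes this by carrying the stronger hypothesis $\alpha\cdot v_i(S')\ge p^t(S')$ for \emph{every} $S'\subseteq S_i^t$, and then proves the same for every $\hat S\subseteq S_i^{t+1}$.

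\textbf{The near-submodularity step needs item-wise truthfulness, not the aggregate bound.} The displayed inequality $v_i(S_i^{t+1})-v_i(A)\ge \frac{1}{\alpha}\bigl(v_i(S_i^t\cup T_i^t)-v_i(S_i^t)\bigr)$ is false in general: if $N\subsetneq T_i^t$ the right-hand side can be strictly larger. Comparing $T_i^t$ to $\emptyset$ only yields the aggregate $v_i(S_i^t\cup T_i^t)-v_i(S_i^t)\ge p^{t+1}(T_i^t)$, which is not enough to control $p^{t+1}(N)$ via the chain. What the paper uses instead is that truthfulness also beats $T_i^t\setminus\{x\}$ for each $x\in T_i^t$, giving $v_i(S_i^t\cup T_i^t)-v_i(S_i^t\cup T_i^t\setminus\{x\})\ge p^{t+1}(x)$. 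Now, for any $X\subseteq S_i^t\cup T_i^t$ with $x\in T_i^t\setminus X$, the pair $X\subseteq S_i^t\cup T_i^t\setminus\{x\}$ is nested, so $\alpha$-near-submodularity yields $\alpha\bigl(v_i(X\cup\{x\})-v_i(X)\bigr)\ge p^{t+1}(x)$. Starting from $X=A$ (where the strengthened hypothesis gives $\alpha\cdot v_i(A)\ge p^t(A)$) and adding the items of $N$ one at a time then gives $\alpha\cdot v_i(A\cup N)\ge p^t(A)+p^{t+1}(N)=p^{t+1}(S_i^{t+1})$, as required; the same argument works for every $\hat S\subseteq S_i^{t+1}$, closing the induction.
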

\begin{proof}
In order to show $\alpha$-individual rationality upon termination, let us prove a stronger 
result. Specifically, we will show that for any time $t$ and any bidder $i$, every set $S' \subseteq S_i^t$ 
satisfies $\alpha \cdot v_i(S') \geq p(S')$. 
We proceed by induction on $t$. The statement trivially holds for $t=0$. 
For the induction hypothesis, assume that bidder $i$ is allocated the set $S_i^t$ in round $t$ where 
\begin{equation}\label{eq:induction}
\alpha \cdot  v_i(S') \geq p^t(S') \ \ \ \ \ \forall S' \subseteq S_i^t
\end{equation} 
We now require the following claim:

%Assume now that she bids on $S \cup T$ at round $t+1$ such that $T$ has non negative (or maybe positive) marginal value.
%We assume moreover that $T$ is the set such that the marginal value is maximized.
% 
 \begin{claim}\label{cl:near-rational1}
  Let $X \subseteq S_i^t \cup T_i^t$ be such that $\alpha \cdot v_i(X) \geq p^t(X\cap S_i^t)+ p^{t+1}(X \setminus S_i^t)$. 
  Then, for every $x\in T_i^t \setminus X$, we have
  \begin{equation*}
  \alpha \cdot v_i(X \cup x) \geq p^t(X\cap S_i^t)+ p^{t+1}(X \cup x \setminus S_i^t)
  \end{equation*}
\end{claim}  
  \begin{proof}
Take any $x\in T_i^t \setminus X$. By $\alpha$ near-submodularity, we have
 \begin{equation*}
\frac{v_i(X \cup x) - v_i(X)}{v_i(S_i^t\cup T_i^t) - v_i(S_i^t \cup T_i^t\setminus x)} \geq \frac{1}{\alpha}
 \end{equation*}
 Consequently,
  \begin{eqnarray} \label{eq:useful}
\alpha \cdot v_i(X \cup x) - \alpha \cdot v_i(X) 
&\geq& v_i(S_i^t\cup T_i^t) - v_i(S_i^t \cup T_i^t\setminus x)\nonumber\\
& \geq& p^{t+1}(x)\\
&=& p^t(x)+\epsilon \nonumber
  \end{eqnarray} 
Here the second inequality follows from truthful bidding. Otherwise,
$T_i^t \setminus x$ is a more profitable bid than $T_i^t$.
 The equality arises as $x\notin S_i$.
 
 By the condition in the statement of the claim, we have $ \alpha \cdot v_i(X) \geq p^t(X\cap S_i^t)+ p^{t+1}(X \setminus S_i^t)$.
Therefore
 \begin{eqnarray*} 
 \alpha \cdot v_i(X \cup x) 
 &\geq& p^t(X\cap S_i^t)+ p^{t+1}(X \setminus S_i^t) + p^{t+1}(x) \\
 &=& p^t(X\cap S_i^t)+ p^{t+1}(X\cup x \setminus S_i^t)
  \end{eqnarray*} 
Again, the equality arises as $x\notin S_i$.
  \end{proof}
  
  By iteratively applying the previous claim over items in a set $\hat{X} \subseteq T_i^t \setminus X$, we obtain
   \begin{claim}\label{cl:near-rational2}
  Let $X \subseteq S_i^t$ be such that $\alpha \cdot v_i(X) \geq p^t(X)$. 
  Then, for every $\hat{X} \subseteq T_i^t \setminus X$, we have
$\alpha \cdot v_i(X \cup \hat{X}) \geq p^t(X\cap S_i^t)+ p^{t+1}(X \cup \hat{X} \setminus S_i^t)$. \qed
\end{claim}
  
Now take any $\hat{S}\subseteq S_i^{t+1}$. To complete the proof of Theorem~\ref{thm:smra-truthful}, we must show that
$\alpha\cdot v_i(\hat{S}) \geq p^{t+1}(\hat{S})$. 
For this purpose, set $S'=\hat{S} \cap S_i^t$ and set $T'=\hat{S}\setminus S_i^t$. 
By the induction hypothesis, we have that $\alpha \cdot  v_i(S') \geq p^t(S')$.

Thus we may apply the second claim with $X=S'$ and $\hat{X}=T'$ to obtain
 \begin{eqnarray*} 
  \alpha \cdot v_i(\hat{S}) &=&
 \alpha \cdot v_i(S' \cup T')\\
 &\geq& p^t(S'\cap S_i^t)+ p^{t+1}((S' \cup T') \setminus S_i^t)\\
 &=&p^t(S')+ p^{t+1}(T')
   \end{eqnarray*} 

Furthermore, note that $S'\subseteq S_i^{t}\cap S_i^{t+1}$. In order to be
the provisional winner of an item~$j$ in both rounds $t$ and $t+1$, it must be the case that
no other bidder bid for item $j$ at the price $p^{t+1}(j)$. 
Thus the price of $j$ at time $t+1$ remains $p^{t}(j)$. Hence $p^{t+1}(S')=p^{t}(S')$, and so
 \begin{eqnarray*} 
  \alpha \cdot v_i(\hat{S}) 
  &\ge& p^t(S')+ p^{t+1}(T')\\
 &=& p^{t+1}(S')+ p^{t+1}(T')\\
 &=&p^{t+1}(\hat{S})
   \end{eqnarray*} 
Theorem~\ref{thm:smra-truthful} follows by induction.
\qed
\end{proof}

We remark that this proof implies a stronger conclusion: if bidder $i$ is truthful then
she is $\alpha$-individually rational {\em regardless} of the strategies of other bidders.

\subsection{A Welfare Guarantee}\label{sec:welfare-truthful}
%To conclude the proof of Theorem~\ref{thm:constantfactor}, we just have to prove the following lemma:
\begin{theorem}\label{lem:ce}
Given $\alpha$-near-submodular truthful bidders, the SMRA outputs an allocation $\mathcal{S}=(S_1,\dots, S_n)$
with social welfare $\omega(\mathcal{S}) \geq \frac{1}{1+\alpha} \cdot \omega(\mathcal{S}_i^*)$
 where $S^*=(S_1^*,\ldots,S_n^*)$ is an allocation of maximum welfare.
\end{theorem}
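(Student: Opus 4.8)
The plan is to compare the final allocation $\mathcal{S}=(S_1,\dots,S_n)$ of the auction, with final item-prices $p$, against the optimal allocation $\mathcal{S}^{*}=(S_1^{*},\dots,S_n^{*})$, using only two ingredients: the termination condition of the SMRA and the $\alpha$-individual rationality guarantee of Theorem~\ref{thm:smra-truthful}. The auction halts at a round $T$ in which every conditional bid is empty, so prices no longer change and bidder $i$ is permanently allocated $S_i = S_i^{T}$ at price $p(S_i)$. Under truthful bidding, the empty conditional bid is optimal for bidder $i$ at that round, which means that adding any $T\subseteq \Omega\setminus S_i$ is unprofitable at the prices bidder $i$ then faces (namely $p_j$ on $S_i$, and $p_j+\epsilon$ on $\Omega\setminus S_i$, which coincide with the final prices). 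Hence $v_i(T\cup S_i)-v_i(S_i)\le p(T)+|T|\cdot\epsilon$ for every such $T$. Choosing $T=S_i^{*}\setminus S_i$ and using monotonicity (free disposal) in the form $v_i(S_i^{*})\le v_i(S_i^{*}\cup S_i)$ gives the per-bidder bound $v_i(S_i^{*})\le v_i(S_i)+p(S_i^{*}\setminus S_i)+|S_i^{*}|\cdot\epsilon$.

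Next I would sum this inequality over all bidders. Since the optimal bundles $S_i^{*}$ are pairwise disjoint, the sets $S_i^{*}\setminus S_i$ are disjoint subsets of $\Omega$, so $\sum_i p(S_i^{*}\setminus S_i)\le p(\Omega)$, where $p(\Omega)=\sum_{j\in\Omega}p_j$; likewise $\sum_i|S_i^{*}|\le|\Omega|$, so the total $\epsilon$-slack is at most $|\Omega|\cdot\epsilon$, which is negligible for small increments (or can be carried as an additive term). This yields $\omega(\mathcal{S}^{*})\le \omega(\mathcal{S})+p(\Omega)+|\Omega|\cdot\epsilon$.

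The remaining step bounds the total final price $p(\Omega)$ in terms of $\omega(\mathcal{S})$. By the Observation, every item with a positive price is sold, so the positive-price items are partitioned among the final bundles $S_1,\dots,S_n$, and therefore $p(\Omega)=\sum_i p(S_i)$. Applying Theorem~\ref{thm:smra-truthful}, $p(S_i)\le \alpha\cdot v_i(S_i)$ for each $i$, so $p(\Omega)\le \alpha\sum_i v_i(S_i)=\alpha\cdot\omega(\mathcal{S})$. Combining with the previous paragraph, $\omega(\mathcal{S}^{*})\le(1+\alpha)\cdot\omega(\mathcal{S})+|\Omega|\cdot\epsilon$, and letting $\epsilon\to 0$ gives $\omega(\mathcal{S})\ge \frac{1}{1+\alpha}\cdot\omega(\mathcal{S}^{*})$.

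I do not expect any single step to be a genuine obstacle; this is essentially a ``first-order optimality at termination plus charge the final prices to the winners'' argument. The two points that need care are the bookkeeping of the $\epsilon$-increments (confirming that the prices bidder $i$ faces in the terminal round are exactly the final prices, since no bid alters anything in that round) and the observation that $\alpha$-near-submodularity enters the welfare bound only through Theorem~\ref{thm:smra-truthful}: the demand/termination half of the argument uses nothing beyond monotonicity of the valuations.
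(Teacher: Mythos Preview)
Your proposal is correct and follows essentially the same route as the paper: use the termination condition (empty conditional bid is optimal) applied to $T=S_i^{*}\setminus S_i$ together with monotonicity to get $v_i(S_i^{*})\le v_i(S_i)+p(S_i^{*}\setminus S_i)$, sum, use that every positively-priced item is sold to replace the total price by $\sum_i p(S_i)$, and invoke Theorem~\ref{thm:smra-truthful} for the $\alpha$-individual rationality bound. The paper routes the summed price through $\sum_i p(S_i^{*})$ rather than $p(\Omega)$ and silently suppresses the $\epsilon$-terms that you track explicitly, but these are cosmetic differences.
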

\begin{proof}
Assume the auction terminates in round $t$ with a set of prices ${\bf p}^t$.
Thus $T_i^t=\emptyset$ for each bidder $i$. In particular, by truthfulness,
we have that
\begin{equation*}
v_i(S_i \cup (S_i^*\setminus S_i)) - p^t(S_i^* \setminus S_i)  \ \le\  v_i(S_i \cup \emptyset) - p^t(\emptyset)
\ =\  v_i(S_i)
\end{equation*}
 Thus
\begin{equation*}
 v_i(S_i^*) \ \leq\  v_i(S_i \cup S_i^*) 
 \ \leq \  v_i(S_i) + p^t(S_i^* \setminus S_i)
\end{equation*}
We now obtain a $(1+\alpha)$ factor welfare guarantee.
  \begin{eqnarray*} 
  \sum v_i(S_i^*) 
  &\leq& \sum_{i=1}^n \left( v_i(S_i) + p^t(S_i^* \setminus S_i)\right) \\
   &\leq& \sum_{i=1}^n  v_i(S_i) + \sum_{i=1}^n p^t(S_i^*) \\
   &\leq& \sum_{i=1}^n  v_i(S_i) + \sum_{i=1}^n p^t(S_i) \\
  &\leq&  \sum_{i=1}^n v_i(S_i) + \sum_{i=1}^n \alpha\cdot v_i(S_i)\\
  &=& (1+\alpha)\cdot \sum_{i=1}^n v_i(S_i) 
   \end{eqnarray*} 
Here the third inequality follows because the SMRA mechanism utilizes provisional winners. This implies that 
every item with a positive price is sold at the end of the auction. Consequently, $\sum_{i=1}^n p^t(S_i) \ge \sum_{i=1}^n p^t(S_i^*)$.
The fourth inequality follows as the auction allocation is $\alpha$-individual rational, as shown in Theorem \ref{thm:smra-truthful}. 
\qed
\end{proof}

By combining Theorem~\ref{thm:smra-truthful} and Theorem~\ref{lem:ce} we obtain Theorem~\ref{thm:constantfactor}.

\subsection{Tightness of the Bicriteria Guarantees}\label{sec:tight-truthful}

The bounds in Theorem \ref{thm:constantfactor} are almost tight.
To see this, consider the following example.
There are $k$ items $X=\{x_{1}, x_{2}, \dots, x_k\}$.
Let there be a large number $L$ of identical bidders.
For any $S\subset X$, each bidder $i$ has a valuation:
$$
v_i(S) = 
\begin{cases} 1 &\mbox{if } |S|=1 \\ 
(|S|-1)\cdot \alpha+ 1 & \mbox{if }  |S|\ge 2
\end{cases} 
$$
It is easy to verify that this function is $\alpha$ near-submodular.

The optimal welfare is obtained by allocating the entire set $X$ to a single bidder achieving social welfare $(k-1)\cdot \alpha+1$.
Now let us examine the allocation produced by the SMRA. Initially, all prices are $0$ and the truthful bid for each bidder is to demand 
the entire set $X$. 
Indeed, every bidder keeps bidding on the entire set (except for the items that she is the standing high bidder) until every item has price 
greater than $\frac{1}{k}\left((k-1)\cdot \alpha+ 1\right)$. At this point, no profitable bids
can be made and all bidders drop out.

In each round, the randomly chosen standing high bidders are all distinct
with probability at least $(1-\frac{k-1}{L-1})^{k}$. For $L>>k$, this probability tends to $1$. So by the end of the 
auction, the $k$ items are allocated to $k$ different bidders with probability almost $1$.
Since the social welfare of this allocation is only $k$, the expected social welfare of the SMRA is around $k$.
When $k$ goes to infinity, the welfare ratio tends to $\alpha$.

Next consider the rationality of this allocation. Each winner was allocated 
exactly one item with probability almost $1$, and the final price of that item is $\frac{1}{k}\left((k-1)\cdot \alpha+ 1\right)$.
The bidder has only value $1$ for the item. When $k$ goes to infinity, this tends to $\alpha$-rationality for the winners.
We remark that even for $k=2$ items, the previous example ensures that the welfare guarantee 
cannot be improved beyond $\frac{\alpha}{2}$ since the optimal welfare is $(\alpha+1)$ and the expected welfare
of the SMRA is $2$. 
%One possible way to improve the performance of the SMRA is to consider price increments that 
%depend on the size of the excess demand. Such modification has been proven to be useful in the study of CCA~\cite{BCH15}. 
%However, we show that this is not the case for SMRA. $\alpha$-individually rational is still unavoidable even if we use excess 
%demand based price increments. Details can be found in Appendix~\ref{sec:futility}.

\section{Bicriteria Guarantees under Locally Optimal Bidding}\label{sec:locallyopt}

As discussed in Section~\ref{sec:truthful-smra}, the assumption of truthful bidding
is unrealistic in the SMRA. Consequently, here we examine an alternate natural bidding method.
Given $S^{t-1}_i$, a bid $T_i^t\subseteq \Omega\setminus S^{t-1}_i$ 
is \emph{locally optimal} if $v_i(S^{t-1}_i\cup T_i^t)- p^t(T_i^t) \ge v_i(S^{t-1}_i\cup X)- p^t(X)$ for all 
$X\subseteq \Omega \setminus S_i^{t-1}$, where $|X\setminus T_i^t|\le 1$ and $|T_i^t\setminus X|\le 1$.
Observe that a locally optimal bid can be obtained via a local improvement algorithm 
that, given the current solution, seeks to add one item, delete one item, or replace one item.
Analysing this local improvement method is useful because local comparison
is a key tool used by bidders in real bandwidth auctions. Thus, there are practical reasons
to suspect that bidders will not make bids that are clearly not locally optimal.
From the theoretical viewpoint, this specific local improvement method is interesting because
it is guaranteed, given any set of prices, to output an optimal
set if the valuations satisfy the gross substitute property \cite{GS99}.

Now if we assume that bidders bid on locally optimal sets, we can still obtain bicriteria guarantees
on both the welfare and the rationality of the mechanism. 

\restatethm{\ref{thm:localimprov}}{
If bidders have $\alpha$-near-submodular valuations and make locally optimal bids,
then the SMRA has welfare ratio $\frac{1}{1+\alpha^2}$ and is $\alpha$-individually rational.}
 
\begin{proof}
We first argue that the allocation of the SMRA is $\alpha$-individually rational.

\begin{lemma}\label{lem:smra-local}
If bidders have $\alpha$-near-submodular valuations and make locally optimal bids,
then the SMRA has welfare ratio $\frac{1}{1+\alpha^2}$ and is $\alpha$-individually rational.
\end{lemma}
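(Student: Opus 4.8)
The plan is to prove the two halves of the statement separately: the $\alpha$-individual rationality is obtained by an essentially verbatim adaptation of the proof of Theorem~\ref{thm:smra-truthful}, while the welfare bound $\frac{1}{1+\alpha^2}$ requires a new telescoping argument particular to the locally optimal setting.

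For the rationality guarantee, I would note that truthfulness enters the proof of Theorem~\ref{thm:smra-truthful} in only one place, namely inside Claim~\ref{cl:near-rational1}, where it is used to deduce $v_i(S_i^t\cup T_i^t)-v_i(S_i^t\cup T_i^t\setminus x)\ge p^{t+1}(x)$ for $x\in T_i^t$. Under locally optimal bidding the bid $T_i^t\setminus x$ differs from $T_i^t$ by a single deleted item, hence it is not a more profitable conditional bid, which gives exactly the same inequality. Everything else in that proof --- Claim~\ref{cl:near-rational2}, the induction on $t$ establishing $\alpha\cdot v_i(S')\ge p^t(S')$ for all $S'\subseteq S_i^t$, and the observation that an item held provisionally in two consecutive rounds keeps its price --- uses only $\alpha$-near-submodularity and the provisional-winner mechanics, so it carries over without change. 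Therefore the final allocation is $\alpha$-individually rational.

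For the welfare guarantee I would follow the skeleton of the proof of Theorem~\ref{lem:ce}, but replace its one strong use of truthfulness (that bidding $\emptyset$ beats bidding all of $S_i^*\setminus S_i$) by the weaker local statement combined with near-submodularity. Suppose the auction terminates in round $t$ with prices ${\bf p}^t$ and every conditional bid empty. Since the empty bid is locally optimal for bidder $i$, adding any single item $x\notin S_i$ is not profitable, so $v_i(S_i\cup x)-v_i(S_i)\le p^t(x)$ (the $\epsilon$ increment being negligible). Ordering $S_i^*\setminus S_i$ as $x_1,\dots,x_r$ and telescoping,
\[
v_i(S_i\cup S_i^*)-v_i(S_i)=\sum_{\ell=1}^r\Bigl(v_i\bigl(S_i\cup\{x_1,\dots,x_\ell\}\bigr)-v_i\bigl(S_i\cup\{x_1,\dots,x_{\ell-1}\}\bigr)\Bigr),
\]
and by $\alpha$-near-submodularity (applied with $A=S_i$ and $B=S_i\cup\{x_1,\dots,x_{\ell-1}\}$, noting $x_\ell$ lies in neither set) the $\ell$-th marginal is at most $\alpha\bigl(v_i(S_i\cup x_\ell)-v_i(S_i)\bigr)\le\alpha\cdot p^t(x_\ell)$. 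Hence $v_i(S_i^*)\le v_i(S_i\cup S_i^*)\le v_i(S_i)+\alpha\cdot p^t(S_i^*\setminus S_i)\le v_i(S_i)+\alpha\cdot p^t(S_i^*)$.

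Summing over all bidders, and using both that the SMRA sells every positively priced item (so $\sum_i p^t(S_i)\ge\sum_i p^t(S_i^*)$) and the $\alpha$-individual rationality established above (so $p^t(S_i)\le\alpha\cdot v_i(S_i)$), I get $\sum_i v_i(S_i^*)\le\sum_i v_i(S_i)+\alpha\sum_i p^t(S_i)\le(1+\alpha^2)\sum_i v_i(S_i)$, i.e.\ $\omega(\mathcal{S})\ge\frac{1}{1+\alpha^2}\,\omega(\mathcal{S}^*)$. The main obstacle is exactly the welfare half: with only local optimality one cannot compare the empty bid to $S_i^*\setminus S_i$ directly, and recovering a bound forces the telescoping into single-item marginals together with the extra factor $\alpha$ that near-submodularity costs when passing from a marginal over a large set to one over a small set --- this is precisely why the denominator is $1+\alpha^2$ rather than $1+\alpha$. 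The rationality half, in contrast, is almost free given the proof of Theorem~\ref{thm:smra-truthful}.
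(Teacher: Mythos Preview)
Your proposal is correct and follows essentially the same approach as the paper. For the rationality half, the paper makes exactly your observation: truthfulness is invoked only to derive Inequality~(\ref{eq:useful}) in Claim~\ref{cl:near-rational1}, and since $T_i^t\setminus x$ differs from $T_i^t$ by a single deletion, local optimality suffices. For the welfare half, the paper (in the companion Lemma~\ref{lem:ce-local}) uses the identical telescoping-plus-near-submodularity argument you describe, followed by the same two summation steps (every positively priced item is sold, then $\alpha$-individual rationality) to reach the $(1+\alpha^2)$ factor.
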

\begin{proof}
The proof is the same as that of Theorem \ref{thm:smra-truthful}. Truthfulness was used
to prove Inequality \ref{eq:useful} in Claim \ref{cl:near-rational1}. Observe, however,
that truthfulness is not necessary; locally optimality is sufficient to prove Inequality \ref{eq:useful}
since we just need that the utility of $T_i$ is better than the utility of any subset 
$T'$ of $T_i$ such that $|T_i \setminus T'| = 1$. Moreover no condition on valuation functions
is used in Claim~\ref{cl:near-rational2}. \qed
\end{proof}

Next, we show that the social welfare of the SMRA is at least $\frac{1}{1+\alpha^2}$ of the optimal welfare.

\begin{lemma}\label{lem:ce-local}
If bidders have $\alpha$-near-submodular valuations and make locally optimal bids
 the SMRA outputs an allocation $S=(S_1,\dots, S_n)$ with social welfare
 \begin{equation*} 
 \sum_{i=1}^n v_i(S_i) \geq \frac{1}{1+\alpha^2} \cdot \sum_{i=1}^n v_i(S_i^*) 
 \end{equation*}
 where $S^*=(S_1^*,\ldots,S_n^*)$ is an allocation of maximum welfare.
\end{lemma}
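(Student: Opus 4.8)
The plan is to mimic the welfare proof for truthful bidding (Theorem~\ref{lem:ce}), but to account for the fact that locally optimal bidding is weaker than truthful bidding. Under truthful bidding, the key consequence used was that at termination $T_i^t=\emptyset$ implies $v_i(S_i\cup S_i^*)-p^t(S_i^*\setminus S_i)\le v_i(S_i)$, i.e. adding the \emph{whole} missing chunk $S_i^*\setminus S_i$ is not profitable. With locally optimal bidding we only know that adding (or swapping) a \emph{single} item is not profitable: for every $x\in S_i^*\setminus S_i$ we have $v_i(S_i\cup x)-v_i(S_i)\le p^t(x)$. So first I would write down this single-item unprofitability condition as the starting point.

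Next I would bound $v_i(S_i^*)$ in terms of $v_i(S_i)$ and the prices $p^t(S_i^*)$, but now paying the extra $\alpha$ factor. Write $v_i(S_i^*)\le v_i(S_i\cup S_i^*)$ by monotonicity, and telescope $v_i(S_i\cup S_i^*)-v_i(S_i)=\sum_{x\in S_i^*\setminus S_i}\bigl(v_i(S_i\cup A_x\cup x)-v_i(S_i\cup A_x)\bigr)$ over a fixed ordering of $S_i^*\setminus S_i$, where $A_x$ is the set of items preceding $x$. Each telescoped term has $S_i\subseteq S_i\cup A_x$, so by $\alpha$-near-submodularity (with the single element $x$, the smaller set $S_i$, the larger set $S_i\cup A_x$) it is at most $\alpha\cdot\bigl(v_i(S_i\cup x)-v_i(S_i)\bigr)\le \alpha\cdot p^t(x)$. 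Summing gives $v_i(S_i^*)\le v_i(S_i)+\alpha\cdot p^t(S_i^*\setminus S_i)\le v_i(S_i)+\alpha\cdot p^t(S_i^*)$. This is exactly where the second factor of $\alpha$ (compared with the truthful case) enters.

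Finally I would sum over bidders and finish as in Theorem~\ref{lem:ce}: $\sum_i v_i(S_i^*)\le \sum_i v_i(S_i)+\alpha\sum_i p^t(S_i^*)\le \sum_i v_i(S_i)+\alpha\sum_i p^t(S_i)$, the last step because every positively-priced item is sold (the Observation), so $\sum_i p^t(S_i)\ge \sum_i p^t(S_i^*)$; then apply $\alpha$-individual rationality from Lemma~\ref{lem:smra-local}, $p^t(S_i)\le \alpha\cdot v_i(S_i)$, to get $\sum_i v_i(S_i^*)\le (1+\alpha^2)\sum_i v_i(S_i)$, which is the claimed welfare ratio $\frac{1}{1+\alpha^2}$.

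The main obstacle — really the only non-routine point — is the telescoping argument justifying $v_i(S_i\cup S_i^*)-v_i(S_i)\le \alpha\cdot p^t(S_i^*\setminus S_i)$ from only \emph{single-item} local optimality. One must be careful that $\alpha$-near-submodularity is applied with the correct nesting ($S_i$ as the small set, $S_i\cup A_x$ as the large set, both containing no element of $S_i^*$ that comes after $x$), and that the single-swap/single-add guarantee of local optimality gives the per-item price bound $v_i(S_i\cup x)-v_i(S_i)\le p^t(x)$ at termination. Everything after that is identical to the truthful-bidding welfare proof, invoking the Observation on sold items and the rationality guarantee of Lemma~\ref{lem:smra-local}.
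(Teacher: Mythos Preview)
Your proposal is correct and follows essentially the same approach as the paper's own proof: derive the per-item bound $v_i(S_i\cup x)-v_i(S_i)\le p^t(x)$ from local optimality at termination, telescope $v_i(S_i\cup S_i^*)-v_i(S_i)$ and apply $\alpha$-near-submodularity to lose one factor of $\alpha$, then sum over bidders and invoke the sold-items observation together with the $\alpha$-individual rationality of Lemma~\ref{lem:smra-local} to lose the second factor. The only cosmetic difference is notation; the structure and each step coincide with the paper's argument.
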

\begin{proof}
Assume the auction terminates in round $t$ with a set of prices ${\bf p}^t$.
Thus $T_i^t=\emptyset$ for each bidder $i$. 
 Let $S_i^*\setminus S_i =\{x_1,x_2,\dots, x_\ell\}$, say.
By local optimality,
we have, for any $x_j\in S_i^*\setminus S_i$, that
\begin{eqnarray*}
v_i(S_i \cup x_j) - p^t(x_j) &\le& v_i(S_i \cup \emptyset) - p^t(\emptyset)\\
&=& v_i(S_i)
\end{eqnarray*}
 Thus 
 \begin{equation} \label{eq:marg}
 v_i(S_i \cup x_j) - v_i(S_i) \le p^t(x_j)
 \end{equation}
 We then have
 \begin{eqnarray*} 
 v_i(S_i^*) &\leq&  v_i(S_i \cup S_i^*) \\
 &\leq&  v_i(S_i) + \sum_{j=1}^\ell  \left(v_i(S_i\cup \{x_1,\dots, x_j\}) - v_i(S_i\cup \{x_1,\dots, x_{j-1}\}) \right) \\
 &\leq&  v_i(S_i) + \alpha\cdot \sum_{j=1}^\ell  \left(v_i(S_i\cup x_j) - v_i(S_i) \right) \\
 &\leq&  v_i(S_i) +\alpha\cdot \sum_{j=1}^\ell p^t(x_j) \\
 &=& v_i(S_i) +\alpha\cdot p^t(S_i^*\setminus S_i) 
 \end{eqnarray*}
 Here the third inequality follows by $\alpha$ near-submodularity. The fourth inequality comes from
 Inequality~(\ref{eq:marg}).
We finally obtain a $(1+\alpha^2)$ factor welfare guarantee:
  \begin{eqnarray*} 
  \sum v_i(S_i^*) 
  &\leq& \sum_{i=1}^n \left( v_i(S_i) + \alpha\cdot p^t(S_i^* \setminus S_i)\right) \\
   &\leq& \sum_{i=1}^n  v_i(S_i) + \alpha\cdot \sum_{i=1}^n p^t(S_i^*) \\
   &\leq& \sum_{i=1}^n  v_i(S_i) + \alpha\cdot \sum_{i=1}^n p^t(S_i) \\
  &\leq&  \sum_{i=1}^n v_i(S_i) + \alpha \cdot\sum_{i=1}^n \alpha\cdot v_i(S_i)\\
  &=& (1+\alpha^2)\cdot \sum_{i=1}^n v_i(S_i) 
   \end{eqnarray*} 
Here the third inequality follows because the SMRA mechanism utilizes provisional winners. This implies that 
every item with a positive price is sold at the end of the auction. Consequently, $\sum_{i=1}^n p^t(S_i) \ge \sum_{i=1}^n p^t(S_i^*)$.
The fourth inequality follows as the auction allocation is $\alpha$-individual rational, as shown in Lemma~\ref{lem:smra-local}. 
\qed
\end{proof}
By combining Lemmas~\ref{lem:smra-local} and~\ref{lem:ce-local}, we obtain our theorem.
\end{proof}

\subsection{Tightness of Bicriteria Guarantees}

The bounds in Theorem \ref{thm:localimprov} are essentially tight.
To see this, consider the following example.
There are $k \cdot n+1$ items. Specifically, there is a special item $z$ and,
for each $i \in [n]$, there is a collection of $k$ items $X_i=\{x_{i,1}, x_{i,2}, \dots, x_{i,k}\}$.

There will be two classes of bidders.
First, there are $n$ Type I bidders. 
Bidder $i\in [n]$ only values the items $X_i=\{x_{i,1}, x_{i,2}, \dots, x_{i,k}\}$. 
For any $S\subseteq X_i$, she has a valuation:
$$
v_i(S) = 
\begin{cases} \alpha &\mbox{if } |S|=1 \\ 
(|S|-1)\cdot \alpha^2+\alpha & \mbox{if } |S|\ge 2  
\end{cases} 
$$
Her marginal value is always zero for any item not in $X_i$. This function is 
$\alpha$ near-submodular.

There are $k \cdot n \cdot L$ Type II bidders, where $L>>k,n$.
For each $i\in [n]$ and each $j\in [k]$, there are $L$ identical bidders that only value
the set $\{x_{i,j}, z\}$.
Each such bidder $\ell$ has a valuation
function with $v_\ell(x_{i,j})=1, v_\ell(z)=H$ and $v_\ell(\{x_{i,j},z\})=H+\alpha$,
where $H$ is an integer larger than $\alpha$ that we will specify later. 
Moreover, her marginal value is always zero for any other item. 
Again, these valuation functions are $\alpha$ near-submodular.

Together, we have $(k \cdot L+1)\cdot n$ bidders. 
The optimal welfare is obtained by allocating each set $X_i$ to the Type I bidder $i$ and
$z$ to any Type II bidder. This allocation has social welfare $\left((k-1)\cdot \alpha^2+\alpha)\right) \cdot n + H$.

Now consider the allocation produced by the SMRA. Initially at ${\bf p=0}$, the unique
locally optimal bid is for each bidder to bid every item in their demand set. Thus a Type I bidder 
demands $X_i$ and a Type II bidder demands $\{x_{i,j}, z\}$.
This bidding behavior will remain until every item has price greater than $\alpha$
(Type II bidders still bid since $H > \alpha$). Let us call this round time $t$.

After time $t$, the locally optimal bid for each Type I bidder $i$ is to demand the empty set.
To see this, note that $p^t(x_{i,j}) > v_i(x_{i,j}) =\alpha$ for each item in $x_{i,j} \in X_i$. 
Each Type $I$ bidder drops out. 

On the other hand, since $L>>k,n$, we may assume that the randomly chosen standing high  
bidder for {\em every} item  in each round is Type II (this happens with probability almost $1$). In particular, at time $t$, the standing high bidder for each item in $X_i$ 
is Type II. After time $t$, Type II bidders only bid on item $z$ until its price reaches $H$.
As a result, by the end of the SMRA, every item is allocated to some Type II bidder.
The total welfare of the SMRA auction is then at most $(kn-1)\cdot 1 + (H+\alpha)$.
For $n>> H$ and sufficiently large $k$ this gives a welfare ratio that tends to $\alpha^2$.

Next consider the rationality of this allocation. Amongst the ``winners", each Type II bidder (except at most one)
wins exactly one item. The final price of that item in some $X_i$ is $\alpha$ but the bidder has a valuation
for the item of one. Thus, all these bidders are only $\alpha$-individually rational.

\section{Individually Rational Bidding}\label{sec:ir}

So, as shown in Theorems~\ref{thm:smra-truthful} and~\ref{thm:localimprov}, truthful and 
locally-optimal bidding can only ensure approximate individual rationality in the SMRA. 
Consequently, such bidding strategies are highly risky. In this section, we investigate
what bidding strategies are risk-free and what are the welfare implications of such strategies.

We call a risk-free strategy {\em conservative}, and show in Section~\ref{sec:conservativesecure}
that conservative bidding is (weakly) characterized by {\em secure bidding}. 
Specifically, secure bidding always produces individually rational outcomes.
Conversely, if the other bidders use secure bids then the only way a bidder can ensure
an individually rational solution is by also bidding securely. This result holds even with stronger
assumptions on the bidding strategies of the other bidders, for example, that they make profit-maximizing
secure bids.

We then examine the welfare consequences of secure bidding. Our main result, in
Section~\ref{sec:welfare-secure}, is that 
then the welfare ratio is at most $1+\alpha$ provided the bidders
make profit maximizing secure bids.
This result is surprising in that we are able to match the welfare guarantee of truthful bidding
without having to lose individual rationality.

%In the light of these two results, we assume in the second part of the section that bidders 
%bid securely.
%We first show that if every bidder is allocated an item in the optimal allocation, then this 
%allocation can be obtained via a sequence of secure bids. However, it  does not imply that 
%we can find a guarantee on the welfare in general. Indeed, we prove that no guarantee on the 
%welfare of the SMRA exist when valuations functions are super-additive for \emph{any} 
%sequence of secure bids. 
%However, in simpler classes of valuations, we show that there exist bidding strategies 
%(that do not depend on other bidders' bidding strategies or valuation functions) such 
%that the welfare of the SMRA is at least an $\frac{1}{8 \alpha}$ fraction of the optimal one 
%if valuation functions are $\alpha$ near-submodular.

\subsection{Secure Bidding}\label{sec:conservativesecure}
We say that a bidding strategy is {\em conservative} if it cannot lead to a bidder
having negative utility. Thus, conservative strategies are individually rational.
To understand what strategies are conservative, we first need to understand
what constitutes a bidding strategy. In the SMRA, a bidder can select a bid based upon the
auction history she observed, for example, the sequence of price vectors, her sequence of conditional bids,
and on her sequence of provisional sets of items. Thus, we consider a bidding strategy to be a function 
of these three factors.\footnote{In some SMRA mechanisms, bidders also know the excess demand of each item.}

We say that a conditional bid $T_i^t$ is \emph{secure} for bidder $i$ (given the provisional 
winning set $S_i^{t}$) if $v_i(S') \geq  p(S')$ for every $S'\subseteq S_i^{t} \cup T_i^t$. 
A bidding strategy is {\em secure} if every conditional bid it makes is secure. It is easy to verify that 
any secure bidding strategy is individually rational. 
We now show that  bidding securely in every round is essentially the only individually rational strategy.

%We first show that if every other bidders' bids are secure, then secure bidding is the only way to 
%ensure individually rationality. This lemma holds for any valuation function of bidder $i$.
%One might think that if we know the other bidders' strategies, we can strategize and 
%maintain $1$-individually rational without using secure bids. Lemma~\ref{lem:conservative truthful} 
%shows that this is impossible if we assume the other bidders bid truthfully. More precisely, we 
%show that, for any valuation function, then any $1$-individually rational strategy must be 
%secure even if all the other bidders are truthful and unit-demand. 

\begin{lemma}\label{lem:conservativegeneral}
%Assume that all the bidders but bidder $i$ are secure bidders.
Let $t$ be an integer and $T_i^{\hat{t}},S_i^{\hat{t}}$, ${\bf p}^{\hat{t}}$ be 
the conditional bid of bidder $i$, the provisional winning set of bidder $i$ and the price 
vector at round $\hat{t}$ for any $\hat{t} \leq t$. If bidder $i$ makes a non-secure bid in round $t+1$, 
then there exist secure bidders who can bid consistent with the history and 
ensure that bidder $i$ has negative utility in the final allocation.
\end{lemma}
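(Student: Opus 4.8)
The plan is to show that if bidder $i$'s round-$(t+1)$ bid $T_i^{t+1}$ is not secure, then by definition there is a set $S' \subseteq S_i^{t+1} \cup T_i^{t+1}$ with $v_i(S') < p(S')$, where $p$ is the price vector that $i$'s bid commits to (namely $p^{t}$ on $S_i^{t+1}$ and $p^{t}+\epsilon$ on $T_i^{t+1}$). I would split $S'$ into $S'_0 = S' \cap S_i^{t+1}$ (items $i$ provisionally holds) and $S'_1 = S' \setminus S_i^{t+1}$ (items $i$ is now freshly bidding on at the incremented price). The goal is to construct secure ``attacker'' bidders, each interested in exactly one of the items of $S'$, who behave consistently with the observed history $\{T_i^{\hat t}, S_i^{\hat t}, {\bf p}^{\hat t}\}_{\hat t \le t}$ up through round $t$, and who from round $t+1$ onward push the prices so that bidder $i$ is forced to win precisely the set $S'$ (or a set whose value is below its price), yielding negative utility.

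The key steps, in order, are: (1) For each item $j \in S'$, introduce one attacker bidder whose valuation is essentially ``$j$ is worth a lot, everything else worth nothing'' — made large enough to outbid any price that could plausibly be reached, but calibrated so that the attacker never violates its own security; since an attacker desiring a single high-value item can always keep bidding on it securely, this is straightforward. These attackers sit out (bid empty, or bid only their single item) in rounds $\le t$, which is consistent with the history because the history only constrains $i$'s observations, not the existence of other bidders. (2) From round $t+1$, have each attacker for $j \in S'_1$ decline to contest $j$ in round $t+1$ — so that $i$, who just bid on $j$ at price $p^t_j+\epsilon$, becomes (or can be chosen as) the provisional winner — and then, from round $t+2$ on, have a *second* wave of attackers (or the same ones, re-bidding) drive the price of every item *outside* $S'$ above anything $i$ values, so $i$ is squeezed down to exactly $S'$. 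Meanwhile attackers for $j \in S'_0$ do nothing special, since $i$ already holds those. The auction then terminates with $i$ holding exactly $S'$ at a total price $\ge p(S')$, while $v_i(S') < p(S')$; hence negative utility. (3) Verify each constructed bidder is secure at every round: an attacker bidding only on one fixed high-valued item is trivially secure, and attackers that only ever bid the empty set are vacuously secure.

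The main obstacle is step~(2): ensuring that the attackers can *force* bidder $i$ to end up with exactly $S'$ rather than some superset or subset, while respecting the randomized tie-breaking of the SMRA and the fact that other (non-attacker, non-$i$) bidders may also be present. The cleanest way around the randomization is to make the attackers bid on every item *not* in $S'$ at overwhelming prices so that $i$ is driven off all of them regardless of tie-breaks, and to have the relevant attackers simply not bid on the items of $S'_1$ in the critical round so the mechanism must assign those to $i$ (the only bidder on them); for items of $S'_0$ no action is needed. One must also check that forcing the prices on $S'$ no higher than the committed vector $p$ is consistent — i.e. that after round $t+1$ no attacker raises the price of any item in $S'$ — which holds since the attackers for $S'$-items are instructed to stay quiet there. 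A minor technical point is that ``negative utility'' should be read with respect to the price actually paid, which is $\ge p(S')$ because prices only rise, so $v_i(S') < p(S') \le p^{\text{final}}(S')$ gives the conclusion. Once this forcing construction is in place, the lemma follows immediately from the definition of security.
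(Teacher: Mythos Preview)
Your proposal has a genuine gap in step~(1), the history-reproduction part. You assert that the attackers can ``sit out (bid empty, or bid only their single item) in rounds $\le t$, which is consistent with the history because the history only constrains $i$'s observations.'' But the history \emph{is} bidder~$i$'s observations: it includes the full price trajectory $\{{\bf p}^{\hat t}\}_{\hat t\le t}$ and the sequence of provisional sets $\{S_i^{\hat t}\}_{\hat t\le t}$, and both of these are produced by the auction dynamics involving \emph{all} bidders. Since in the constructed instance your attackers are the only bidders other than~$i$, they must actively bid so as to generate precisely this price trajectory and these provisional sets; if they sit out, prices on items~$i$ never bid on will not rise, and items~$i$ did bid on but (per the history) failed to win will instead go to~$i$. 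The non-secure bid $T_i^{t+1}$ is only guaranteed to be made if bidder~$i$ has seen exactly the given history, so reproducing it is not optional.

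The paper handles this by a more delicate construction: it introduces many copies of a single attacker with an \emph{additive} valuation, set to $M+2\epsilon$ on $\Omega\setminus S'$ and to the round-$t$ price (or that minus $\epsilon$) on items of $S'$. In each round $\hat t\le t$ the copies bid exactly on those items whose price must increment and which are not to land in $S_i^{\hat t}$; the many-copies trick ensures, with high probability under the random tie-breaks, that~$i$'s provisional set matches $S_i^{\hat t}$. The calibration of the attacker's value on $S'$-items to the current price serves double duty: it lets the attackers securely raise those prices to their observed levels during rounds $\le t$, and it prevents them from bidding (securely) on $S'$ after round $t+1$, so~$i$ is stuck with~$S'$. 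Your separate attackers for items in $S'$ with ``very high'' values cannot play this role; if their value is high they can keep securely bidding on $S'$ and take it away from~$i$, and if you instruct them never to bid on it then they contribute nothing. The essential missing idea is this careful price-calibrated additive valuation together with the scripted per-round bids that regenerate the history.
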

\begin{proof}
Assume that the conditional bid $T_i^{t+1}$ of bidder $i$ at some round $t$ is not secure, 
then there exists $S' \subseteq S_i^{t+1} \cup T_i^{t+1}$ such that $S'$ 
satisfies $v_i(S') < p_i^{t+1}(S')$. Let us prove that there exists an 
auction such that, with high probability, (i) the set allocated to $i$ is $S'$, (ii) at 
any time $\hat{t} \leq t$, the provisional winning set of $i$ is $S_i^{\hat{t}}$ and 
(iii) the price vector at round $\hat{t}$ is ${\bf p}^{\hat t}$.

The auction is as follows: there are many copies of the same bidder $1$ whose 
valuation function is $v_1$. 
Let $M$ be an integer larger than the maximum of the prices at any round $\hat{t} \leq t$ 
and the maximum valuation of any subset of items for bidder $i$. The valuation 
function $v_1$ of all the copies of bidder $1$ is additive\footnote{A valuation 
function $v$ is additive if $v(S)= \sum_{s \in S} v(s)$.} and the value of each item is the following:
$$
v_1(s) = 
\begin{cases} M + 2 \cdot \epsilon &\mbox{if } s \in \Omega \setminus S'\\ 
p^t(s) & \mbox{if } s \in S' \setminus S_i^t \\
p^t(s)-\epsilon & \mbox{if } s \in S' \cap S_i^t
\end{cases} 
$$

\begin{claim}
Assume that $i$ bids on $T_i^{\hat{t}}$ at any round $\hat{t} \leq t$. There is a sequence of 
secure bids such that, for every $\hat{t} \leq t$, with high probability
\begin{enumerate}[(i)]
 \item the price vector is exactly ${\bf p}^{\hat{t}}$ at the end of round $\hat{t}$, 
 \item bidder $i$ is the standing high bidder of the set $S_i^{\hat{t}+1}$.\footnote{Recall that $i$ is the standing high bidder of $S_i^{\hat{t}}$ at the beginning of round $\hat{t}$, which explains the index difference.} 
\end{enumerate}
\end{claim}
\begin{proof}
By induction on $t$, let us prove that if the copies of bidder $1$ use the following strategy, 
the conclusion holds. 
If the price of item $s$ does not increase from round $\hat{t}$ to round $\hat{t}+1$
then no copy of bidder $1$ bids on it at round $\hat{t}$;
if the price of item $s$ increases and $s \in S_i^{\hat{t}}$ then no copy of bidder $1$ bids 
on it at round $\hat{t}$;
if the price of $s$ increases and $s \notin S_i^{\hat{t}}$ then all the copies of bidder $1$ 
bid on it at round $\hat{t}$. 

By construction of the valuation function $v_1$, at any time $\hat{t} \leq t$, the value of any 
item $s \in \Omega \setminus (S' \cap S_i^{t+1})$ for copies of bidder $1$ is at least its price. 
Moreover, if $s \in S_i^t$ then copies of bidders $1$ do not bid on $s$ at price $p^t(s)$ by 
construction. It is easy to verify that $v_1(s)$ is larger than the price of $s$ at any 
round where copies of $1$ bid on it.
As $v_1(\cdot)$ is additive, all bids by copies of bidder $1$ up to round $t$ are secure.

Let us show that items in excess demand are those whose prices increase between 
rounds $\hat{t}-1$ and $\hat{t}$. If the price of an item in $\Omega \setminus S_i^{\hat{t}}$ is 
distinct in ${\bf p}^{\hat{t}-1}$ and ${\bf p}^{\hat{t}}$, then all the copies of bidder $1$ bid on it, 
and it is in excess demand. Now assume $s \in S_i^{\hat{t}}$, if the price of $s$ increases, 
then $s \in S_i^{\hat{t}} \setminus S_i^{\hat{t}-1}$ (the provisional winner must change when 
there is a price increment). Thus bidder $i$ bids on $s$ and then $s$ is in excess demand.

Now let us show that with high probability, bidder $i$ is the standing high bidder of the 
items in $S_i^{\hat{t}}$. Since the prices of any item $s$ in $S_i^{\hat{t}-1} \cap S_i^{\hat{t}}$ 
do not increase, copies of bidder $1$ do not bid on $s$ at round $\hat{t}$. Thus 
$s$ is still in $S_i^{\hat{t}}$. 
Moreover, bidder $i$ is the unique bidder in excess demand for the items 
in $S_i^{\hat{t}} \setminus S_i^{\hat{t}-1}$. So the provisional set of bidder $i$ contains $S_i^{\hat{t}}$. 
Let us prove that it does not contain any other item $s$ with high probability. First assume 
that $s \in S_i^{\hat{t}-1} \setminus S_i^{\hat{t}}$. Thus the price of $s$ increases. And since $i$ 
was the standing high bidders of these items at round $\hat{t}-1$, she cannot be the standing 
high bidder anymore at round $\hat{t}$. Assume now that $i$ bids on 
$s \notin S_i^{\hat{t}-1} \cup S_i^{\hat{t}}$. Then by construction, all the other copies 
of $1$ also bid on $s$ and then, with high probability (since there are many copies of 
bidder $1$), $s$ is not allocated to bidder $i$, which completes the proof of the claim. 
\qed
\end{proof}

Now assume that at round $t+1$, bidder $i$ decides to bid on $T_i^{t+1}$. Starting 
from round $t+1$, copies of bidder $1$ securely bid on subsets in the 
complement of $S'$ until the prices of all items in $\Omega\setminus S$ reach $M+2\epsilon$. 
Note that since no copy of $1$ bid on any item in $S'$, all the items in $S'$ are in the provisional 
set of $i$ at the end of round $t+1$.
Copies of $1$ continue to perform the same bids until they drop out. On the other hand, bidder $i$ 
can perform any bid.

Let us first show that the set allocated to $i$ contains $S'$. At the end of round $t+1$, the 
price of item $s$ in $S'$ is $p^t(s)$ if $s \in S' \cap S_i^t$ and $p^t(s)+\epsilon$ if $s \in S' \setminus S_i^t$. 
Thus the price of $s$ is above $v_1(s)$ and then copies of $1$ cannot bid anymore on $s$ since they 
make secure bids. Since $S' \subseteq S_i^{t+1}$, the set of items allocated to $i$ by the SMRA 
contains the set $S'$.

Assume now that $s \notin S'$ is allocated to $i$ at the end of the procedure. Since copies of $1$ 
continue to bid on it until its price is at least $M+ \epsilon$. This implies that bidder $i$ bids on it 
at price at least $M+\epsilon$. Thus the price of the set allocated to $i$ is at least $M+\epsilon$, 
which is above the value of any set for bidder $i$ by definition of $M$. So $i$ is not individually 
rational. Otherwise, bidder $i$ is allocated the set $S'$ and by definition of $S'$, we have 
$p^t(S') >v_i(S')$ and then bidder $i$ receives negative utility.\qed
\end{proof}

So, if the bids of the other bidders are secure, then performing a non-secure bid may lead to negative 
utility. One may ask if a similar statement still holds if  stronger assumptions 
are made concerning the strategies of the other bidders. 
This is indeed the case. The following lemma states that even if we know the other
bidders are truthful (or if they make profit-maximizing secure bids),
making any non-secure bid is not individually rational.
Bidder $i$ performs a \emph{profit-maximizing secure bid} $T_i$ if the bid is secure 
and the utility of $S_i \cup T_i$ is maximized over all possible secure bids.

\begin{lemma}\label{lem:conservative truthful}
 Let $t$ be an integer and $T_i^{\hat{t}}, S_i^{\hat{t}}$, ${\bf p}^{\hat{t}}$ be the 
 conditional bid of bidder $i$, the provisional winning set of bidder $i$ and the price vector 
 at round $\hat{t}$ for any $\hat{t} \leq t$. Assume that there is an item of 
 value $0$ for $i$ with price $\epsilon \cdot \hat{t}$ at any round $\hat{t} \leq t$.
If bidder $i$ makes a non-secure bid in round $t+1$, 
then there exist truthful (or profit-maximizing secure) bidders who can bid consistent 
with the history and ensure that bidder $i$ has negative utility in the final allocation.
 \end{lemma}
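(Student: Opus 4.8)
The plan is to adapt the construction from Lemma~\ref{lem:conservativegeneral}, replacing the copies of an additive bidder~$1$ with bidders who either bid truthfully or make profit-maximizing secure bids, while still forcing the price trajectory and provisional sets to match the prescribed history $({\bf p}^{\hat t}, S_i^{\hat t})_{\hat t \le t}$ and then trapping bidder~$i$ into winning the ``bad'' set $S'$ with $v_i(S') < p^{t+1}_i(S')$. First I would fix, as before, a large integer $M$ exceeding every price at rounds $\hat t \le t$ and every valuation $v_i(\cdot)$. The key new ingredient is the extra hypothesis: there is an item of value $0$ to bidder~$i$ whose price is exactly $\epsilon\cdot\hat t$ at each round $\hat t \le t$. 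This ``clock item'' is what lets us emulate an arbitrary price history with \emph{profit-maximizing} or \emph{truthful} bidders rather than with bidders who are willing to stop bidding at an artificially chosen moment — the issue being that a profit-maximizing or truthful bidder cannot be asked to simply ``drop out'' at a convenient round; instead we need every round up to time $t$ to be one in which continuing to bid is genuinely optimal for them, and we need a device that terminates their activity on the relevant items at precisely the right price.

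The main steps I would carry out are: (1) Design the auxiliary bidders. For each item $s$ whose price increases from round $\hat t$ to $\hat t+1$ in the target history, introduce a fresh batch of many identical bidders who value only $s$ (and possibly the clock item), with value chosen just large enough that bidding on $s$ at price $p^{\hat t}(s)+\epsilon$ is profit-maximizing/truthful at round $\hat t$ but unprofitable thereafter; because these valuations are single-item (hence additive, hence trivially $\alpha$-near-submodular and gross-substitutes), truthful and profit-maximizing-secure bidding coincide with ``bid iff marginal value exceeds price''. (2) Verify that under this design the realized price vector at the end of each round $\hat t\le t$ is exactly ${\bf p}^{\hat t}$, and — using the randomized tie-breaking over the many copies, exactly as in the claim inside Lemma~\ref{lem:conservativegeneral} — that with high probability bidder~$i$ is the standing high bidder of precisely $S_i^{\hat t}$ at the start of round $\hat t$ (items that should leave $S_i$ do so because their price rises and a copy outbids $i$; items bid on spuriously by $i$ are grabbed by copies w.h.p.; items that stay in $S_i$ are untouched). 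This is essentially a reprise of the earlier argument, but one must recheck that every auxiliary bid is now simultaneously \emph{secure} (value $\ge$ price on the singleton) and \emph{profit-maximizing among secure bids} (equivalently, just profit-maximizing, since the valuations are additive), and likewise that these are the truthful bids. (3) Handle rounds after $t+1$: once bidder~$i$ has placed the non-secure bid $T_i^{t+1}$ so that $S'\subseteq S_i^{t+1}$, let the remaining auxiliary bidders (those valuing items in $\Omega\setminus S'$) keep bidding up their items to $\approx M$, and use the clock item to keep the auction running — bidder~$i$ now faces the same dichotomy as before: either she is eventually forced to bid at least $M$ on some item outside $S'$ and is non-individually-rational by choice of $M$, or she ends up with exactly $S'$ and pays $p^{t+1}_i(S') > v_i(S')$, hence negative utility.

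I expect the main obstacle to be step~(1)–(2): making the auxiliary bidders' behavior \emph{forced} under the profit-maximizing (or truthful) assumption rather than merely \emph{possible}. In Lemma~\ref{lem:conservativegeneral} the copies of bidder~$1$ were only required to bid securely, so any behavior consistent with security sufficed; here we must pin down their bids exactly from optimality, which means the valuation values have to be tuned so that at each round the unique optimal action is the one we want — in particular, a truthful/profit-maximizing bidder for item $s$ must want to bid at round $\hat t$ (so its value exceeds $p^{\hat t}(s)+\epsilon$) but must not have wanted to be the standing high bidder too early nor to continue past the target stopping price, and must never be tempted to bid on items it does not value. The clock item of value $0$ and price $\epsilon\hat t$ is precisely the hypothesis that resolves the ``termination timing'' difficulty: it supplies a canonical, history-independent source of rising price that keeps the auction alive exactly as long as needed without requiring any bidder to act sub-optimally, and (having value $0$ to~$i$) never helps bidder~$i$ escape the trap. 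Once the auxiliary bidders' moves are shown to be forced, the rest of the argument is a direct transcription of the proof of Lemma~\ref{lem:conservativegeneral}.
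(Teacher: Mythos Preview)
Your plan tracks the paper's proof in outline: build auxiliary bidders whose optimal play reproduces the history $({\bf p}^{\hat t},S_i^{\hat t})_{\hat t\le t}$, then after round $t+1$ drive prices on $\Omega\setminus S'$ past $M$ so that bidder~$i$ is stuck with $S'$. That is exactly the paper's strategy. The place where your sketch diverges from a working construction is the gadget in step~(1), and relatedly your description of the clock item's role.

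You call the auxiliary bidders ``single-item (hence additive)'' and say the clock ``keeps the auction alive'' and resolves the \emph{termination} timing. In the paper the auxiliary bidders are \emph{unit-demand} over two items (their target $s'$ and the clock $s$), and the clock resolves the \emph{initiation} timing --- it is a parking alternative, not merely a timer. Concretely, for each item $s'$ whose price jumps at round $\hat t$ (with $s'\notin S_i^{\hat t}$), the paper creates three copies of a unit-demand bidder with $v(s')=p^{\hat t}(s')$ and $v(s)=\epsilon\hat t$, plus a tie-break preferring $s'$. At round $\hat t'<\hat t$ the clock's utility $\epsilon(\hat t-\hat t')$ is at least the utility on $s'$ (which equals $\epsilon$ times the number of future increments of $s'$ before round $\hat t$), so the truthful/profit-maximizing bid is the clock; at round $\hat t$ the tie-break sends them to $s'$; afterwards both utilities are nonpositive. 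A separate class of ``final bidders'' (unit-demand over their target in $\Omega\setminus S'$ and the clock, with clock value $M+\epsilon$ and target value $2(M+\epsilon)$) likewise park on the clock through round~$t$ and switch afterwards. This two-item unit-demand trick is what kills the ``too early'' problem you correctly flag: an additive bidder valuing $s'$ at $p^{\hat t}(s')$ would bid on $s'$ from round~$0$ and wreck the trajectory.

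One further small correction: the paper only obtains the matching history \emph{with positive probability}, not with high probability, because when $i$ bids on an item entering $S_i^{\hat t}$ she may be competing with initial bidders of later rounds and must be randomly selected. Once you swap ``additive'' for ``unit-demand over $\{s',s\}$ with tie-break'' and adjust the probability claim, your steps (1)--(3) are the paper's proof.
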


\begin{proof}
Let us construct an auction such that at any round $\hat{t} \leq t$, the price vector is ${\bf p}^{\hat{t}}$ and the provisional winning set of bidder $i$ is $S_i^{\hat{t}}$.
Assume that at some round $t$, the bid of $i$ is not secure. Then there exists $S' \subseteq S_i^t \cup T_i^{t+1}$ that satisfies $v_i(S') < p(S')$.

\paragraph{Instance of the SMRA.}
Let us now construct an auction such that $S'$ is allocated to $i$. Let us denote by $s$ the item of value $0$ for bidder $i$ such that 
$p^{\hat{t}}(s)= \epsilon \cdot \hat{t}$ for any $\hat{t} \leq t$. Before describing formally the instance, let us give some intution.
There are two main types of bidders.
First we create bidders for time periods $\hat{t} \leq t$. For any item $s'$ whose price increases at round $\hat{t}$ 
and such that $i$ does not bid on $s'$ at round $\hat{t}$, we create unit-demand bidders that bid on $s$ in the first $\hat{t}-1$ rounds 
and bid  on $s'$ at round $\hat{t}$. These bidders ensure that the price vector is ${\bf p}^{\hat{t}}$ at any round $\hat{t}$ smaller than $t$. 
Second, we create bidder for time period $\hat{t}>t$. These bidders will ensure that that the set allocated to $i$ is $S'$. Indeed, they will bid
on items in the complement of $S'$ until we are sure that, if $i$ still bids on them, the strategy of $i$ is not conservative. 
The most technical part of the proof consists in constructing the first type of bidders.

Let $s'$ be an item distinct from $s$ such that $p^{\hat{t}}(s') \neq p^{\hat{t}-1}(s')$. Assume moreover that $s' \notin S_i^{\hat{t}}$. Then we create three copies of a bidder $b$ such that:
$$
v_b(S) = 
\begin{cases} 
p^{\hat{t}}(s') &\mbox{if } S=\{s'\}\\ 
\epsilon \cdot \hat{t} & \mbox{if } S = \{ s \}
\end{cases} 
$$
Moreover, we assume that if the utility of both items is the same, bidder $b$ prefers item $s'$. This preference rule can also be simulated by making a small modification to the valuation function. However, we present it using the preferences rule, as we believe it makes the proof cleaner. These three bidders are called the \emph{initial bidders of round $\hat{t}$ for item $s'$}.
The initial bidders are the union of the initial bidders of round $\hat{t}$ for $\hat{t} \leq t$. The initial bidders will permit to fit the price vector at any round $\hat{t} \leq t$.

Let $M$ be the maximum of the value of a subset of items for bidder $i$ and of $\epsilon \cdot t$.
We can now create the second type of bidder that will push the prices of items in $\Omega \setminus S'$ after time $t$. For each item $s' \in \Omega \setminus S'$, we create three copies of the same bidder such that the value of $s'$ is $2 (M+\epsilon)$ and the value of $s$ is $M+\epsilon$. Such bidders are called the \emph{final bidders}.  The final bidders will permit to ensure that the set $S'$ will be the set allocated to bidder $i$ at the end of the auction. \vspace{5pt}
%And for  each item $s' \in \Omega \setminus S'$, we create three copies of the same bidder such that the value of $s'$ is $p(s)-p(s')$ and the value of $s$ is $p(s)$.

Let us prove the following simple facts:
\begin{claim}
Let $\hat{t}\leq t$. Assume that at round $\hat{t}-1$ of the auction, the price vector is ${\bf p}^{\hat{t}-1}$ then: 
 \begin{itemize}
  \item All the initial bidders of round $\hat{t}' < \hat{t}$ have an empty conditional bid at round $\hat{t}$.
  \item If the initial bidders of round $\hat{t}' > \hat{t}$ for item $s'$ bid on $s'$ at round $\hat{t}$ then the price of $s'$ must increase at any step between $\hat{t}$ and $\hat{t}'$.
  \item All the final bidders bid on $s$ at round $\hat{t}$.
  \item If $s'$ is an item such that $p^{\hat{t}}(s')-p^{\hat{t}-1}(s')=\epsilon$ and $s' \notin S_i^{\hat{t}}$ then there are initial bidders of round $\hat{t}$ bidding on $s'$.
 \end{itemize}
\end{claim}
\begin{proof}
Let $b$ be an initial bidder of round $\hat{t}'$ for item $s'$  with $\hat{t}' < \hat{t}$. The price of $s$ at round $\hat{t}$ is $\epsilon \hat{t}$ which is larger than the value of $s$ for $b$ which is $\epsilon \cdot \hat{t}'$. Moreover, since we have created initial bidder of round $\hat{t}'$ for item $s'$, the price of $s'$ increases between rounds $\hat{t}'$ and $\hat{t}'+1$. And by definition initial bidders, the value of $s'$ for $b$ is $p^{\hat{t}'}(s')$. Thus the price of $s'$ at time $\hat{t}$ is larger than the value of $s'$ for $b$. Since $b$ make secure bids, she does not bid on any item and then has an empty conditional bid at round $\hat{t}$.

Let $\hat{t}' > \hat{t}$ and $b$ be an initial bidder of round $\hat{t}'$ for item $s'$. At round $\hat{t}$, the utility of bidder $b$ for item $s$ is $\epsilon \cdot \hat{t}' - \epsilon \cdot \hat{t}= \epsilon \cdot (\hat{t}'-\hat{t})$. And the utility of $b$ for $s'$ is $p^{\hat{t}'}(s')-p^{\hat{t}}(s')$. This difference is $\epsilon$ times the number of rounds between $\hat{t}$ and $\hat{t}'$ where the price of $s'$ increases. Thus it at most $\epsilon \cdot (\hat{t}'-\hat{t})$ and we have equality if and only if the price of $s'$ increases at any round between $\hat{t}$ and $\hat{t}'$. If the equality does not hold, then the utility of $s$ is larger than the one of $s'$. Otherwise, the preference rule ensures that $s'$ is prefered, which proves the second point.

The proof of the third point is straightforward. Assume that $b$ is a final bidder for item $s'$. By definition of $M$, the utility of $b$ on $s$ is at least $M + 2\epsilon$ since $p^t(s) \leq M$. On the other hand, the utility of $s'$ is at most $M+\epsilon$. Thus $b$ bids on $s'$.

Let us finally prove the last point. According to the definition of the instance, we have created three initial bidders of round $\hat{t}$ for the item $s'$. Given ${\bf p}^{\hat{t}-1}$, these bidders have utility $\epsilon$ for both $s$ and $s'$ at round $\hat{t}$. By definition of the preference rule, if these bidders are allocated the empty set, they prefer bidding in $s'$ rather than in $s$. Since they bid on at most two items $s$ and $s'$, one of these three bidders has an empty provisional set and then bid on $s'$ at round $\hat{t}$.
\end{proof}

\paragraph{Running the auction.}
Let us prove by induction that, with positive probability, at any round $\hat{t} \leq t$, the price vector is ${\bf p}^{\hat{t}}$ and the provisional set of bidder $i$ is $S_i^{\hat{t}}$. For $t=0$ the statement immediately holds. Now assume that the price vector of the auction fits ${\bf p}^{\hat{t}-1}$ after round $\hat{t}-1$ and that $i$ is the standing high bidder of the items in $S_i^{\hat{t}-1}$. For each item $s'$ such that the price of $s'$ increases between $\hat{t}-1$ and $\hat{t}$ and such that $s' \notin  S_i^{\hat{t}}$, we have created initial bidders for $s'$ of round $\hat{t}$. Then the last point of the claim ensures that  the price of $s'$ increases between round $\hat{t}-1$ and $\hat{t}$ increases. 
Moreover, since these items are in excess demand and $i$ does not bid on them, bidder $i$ is not the standing high bidders on  these items are round $\hat{t}$.

Now consider any item $s'$ in $S_i^{\hat{t}-1}$ that is still in $S_i^{\hat{t}}$. No initial bidder of round $\hat{t}$ was created for item $s'$. Moreover, all the initial bidders of round $\hat{t}' > \hat{t}$ for item $s'$ do not bid for $s'$ by the second point of the claim. Indeed the price of $s'$ does not increase at any step between $\hat{t}$ and $\hat{t}'$. 
So there is no excess demand on $s'$ and then the price of $s'$ and its standing high bidder remain the same. A similar argument ensures that no item $x$ such that $p^{\hat{t}'-1}(x)= p^{\hat{t}'}(x)$ is in excess demand. Thus the price of $x$ is not modified.

Let us finally consider the items $s'$ in $S_i^{\hat{t}} \cap T_i^{\hat{t}-1}$. Since the condition bid of $i$ contains $s'$, $s'$ is in excess demand. Moreover, since $i$ is a candidate to be the new provisional winner of such an item, $s' \notin S_i^{\hat{t}-1}$. However other bidders may also be candidates to be provisional winner of $s'$ (for instance initial bidders for $s'$ of later rounds). Since the provisional winner is chosen uniformly at random amongst the candidates, bidder $i$ is chosen with positive probability. 

So, for any round $\hat{t} \leq t$, there is a positive probability that at any round $\hat{t} \leq t$ the price vector is ${\bf p}^{\hat{t}}$ and the provisional winning set of $i$ is $S_i^{\hat{t}}$. Now at round $t+1$, bidder $i$ bids on $T_i^{t+1}$. The behavior of the auction after this round is different (and actually simpler).
Indeed, the claim ensures that no initial bidder makes any conditional bid anymore. Moreover
\begin{itemize}
 \item  Since the prices of items in $S'$ are larger than their values for any other bidders, no bidder will bid on any item in $S'$ after round $t$. Thus the set of items allocated to $i$ contains the set $S'$.
 \item Since there exist a lot of bidders whose value on each item in $\Omega \setminus S'$ is larger than the value of any set for bidder $i$, these bidders will continue to bid on these items until bidder $i$ drops out. Thus no item of $\Omega \setminus S'$ will be allocated to $i$ at the end of the procedure.
\end{itemize}
It completes the proof of Lemma~\ref{lem:conservative truthful}. \qed
\end{proof}

\subsection{Social Welfare under Secure Bidding}\label{sec:welfare-secure}

The previous results ensure that secure bidding strategies are essentially the only way to guarantee $1$-individual rationality. In this section, we will assume that bidders strategies are secure.
The following simple lemma ensures that any allocation where each bidder is allocated at least one 
item can be obtained in the SMRA with bidders only making secure bids.

\begin{lemma}\label{lem:allocnonempty}
 Any allocation where each bidder is allocated at least one item can be obtained via the SMRA with secure bidders. In particular if each bidder
 is allocated at least one item then the optimal allocation can be obtained if bidders are secure.
\end{lemma}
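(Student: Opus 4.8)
The plan is to realize any target allocation $\mathcal{S} = (S_1, \dots, S_n)$ in which every bidder receives at least one item by explicitly choosing prices and a consistent sequence of secure conditional bids. First I would fix a sufficiently small common increment $\epsilon$ and a bound $M$ exceeding every $v_i(\Omega)$; the idea is that each bidder $i$ will simply ``claim'' the items of $S_i$ at negligible prices and never bid on anything else. Concretely, in round $0$ each bidder $i$ bids on all of $S_i$ at price $\epsilon$; since the $S_i$ are pairwise disjoint, there is no contention, so after round $1$ we have $S_i^1 = S_i$ for every $i$ and each item of $\bigcup_i S_i$ has price $\epsilon$. From then on every bidder submits the empty conditional bid, the prices freeze, and the auction terminates with the allocation $\mathcal{S}$ at the price vector that assigns $\epsilon$ to each allocated item and $0$ to the rest.

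The second step is to check that all of these bids are secure. A conditional bid $T_i^t$ is secure if $v_i(S') \ge p(S')$ for every $S' \subseteq S_i^t \cup T_i^t$. In round $0$, $S_i^0 = \emptyset$ and $T_i^0 = S_i$, so I need $v_i(S') \ge \epsilon \cdot |S'|$ for every $S' \subseteq S_i$; since $S_i \ne \emptyset$ and each $v_i$ is monotone with $v_i$ of a nonempty set strictly positive (or at least we may rescale / pick $\epsilon$ small enough relative to the smallest positive marginal value appearing), this holds for $\epsilon$ small. In every later round $T_i^t = \emptyset$ and $S_i^t = S_i$, so security again reduces to $v_i(S') \ge \epsilon|S'|$ for $S' \subseteq S_i$, which is the same condition. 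Hence every bid in the run is secure.

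The only genuine subtlety is the degenerate case where some bidder $i$ has $v_i$ equal to $0$ even on some singleton of $S_i$, which would make no positive $\epsilon$ work for that singleton. This is handled by the hypothesis that each bidder is allocated at least one item together with free disposal: if $v_i(S_i) = 0$ we may instead let bidder $i$ bid on (and win) $S_i$ at price exactly $0$ — the SMRA allows items to stay at price $0$ with a provisional winner — or, more cleanly, observe that the statement is only interesting when the allocation has positive welfare and we can choose $\epsilon$ below $\min\{v_i(\{j\}) : j \in S_i,\ v_i(\{j\}) > 0\}$ for all $i$, pricing any zero-value items at $0$. I expect this bookkeeping about zero values, rather than the main construction, to be the ``hard'' part, and it is really just a case distinction. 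The final sentence of the lemma — that the optimal allocation can be obtained when it happens to give every bidder a nonempty set — is then immediate by applying the construction to $\mathcal{S}^*$.
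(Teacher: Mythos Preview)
Your approach is essentially the paper's: have each bidder $i$ bid exactly $S_i$ in the opening round and nothing thereafter, using disjointness of the $S_i$ to avoid contention. The paper's proof is a bit more direct, though. It simply observes that in round $0$ every item has price $0$, so for any $S'\subseteq S_i^0\cup T_i^0=S_i$ we have $v_i(S')\ge 0=p^0(S')$ by non-negativity of valuations; hence the round-$0$ bid is secure with no need to tune~$\epsilon$ or worry about zero-value singletons. After that single round each bidder is provisional winner of exactly $S_i$, and the empty conditional bids terminate the auction. Your extra bookkeeping about choosing $\epsilon$ below the smallest positive singleton value, and the case split for items with $v_i(\{j\})=0$, is therefore unnecessary for the round-$0$ security check and can be dropped. (You are right that there is a minor sloppiness in the paper about what ``secure'' means for the forced empty bid in the terminating round, but that is a presentation issue rather than a gap in the construction.)
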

\begin{proof}
Let $\mathcal{S}=(S_1,\ldots,S_n)$ be an allocation of the items where $S_i\neq\emptyset$ is allocated to bidder $i$.
 Then this allocation can be obtained with secure bidders. Indeed, assume 
 that at round $t=0$ each bidder simply bids on the set $S_i$. 
 Since all the items have price $0$, all the bids are trivially secure. Then, at the end of first step,
 every bidder $i$ is the standing high bidder of the set $S_i$ and no item is in excess demand. Thus the SMRA 
 stops and allocates to each bidder $i$ the set $S_i$.
 \qed
\end{proof}

Lemma~\ref{lem:allocnonempty} is unsatisfactory in two ways. First, if there are bidders 
that are allocated nothing then the situation can be far more complex.
Specifically, it may then be the case, see Lemma~\ref{lem:noguaranteesuperadditive}, that 
secure bidding cannot provide a guarantee on welfare.

\begin{lemma}\label{lem:noguaranteesuperadditive}
If super-additive bidders only make secure bids, then there is no guarantee on the social welfare of the SMRA.
\end{lemma}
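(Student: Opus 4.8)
The plan is to exhibit a single instance with super-additive (indeed, additive-plus-a-bonus) valuations in which secure bidding forces an outcome whose welfare is an arbitrarily small fraction of the optimum. The construction mirrors the ``bad example'' of Section~\ref{sec:bad-smra} but is adapted so that the low-welfare outcome is unavoidable \emph{for any secure strategy}, not merely likely under randomization. Concretely, I would take two items $\{a,b\}$ and two bidders. Bidder $1$ is a ``small'' bidder with $v_1(a)=v_1(b)=1$ and $v_1(\{a,b\})=2$ (additive, hence trivially super-additive). Bidder $2$ is the ``big'' bidder with $v_2(a)=v_2(b)=0$ and $v_2(\{a,b\})=M$ for a large $M$; this is super-additive since $v_2(\{a,b\})=M > 0 = v_2(a)+v_2(b)$. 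The optimal allocation gives $\{a,b\}$ to bidder $2$, with welfare $M$.

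The key step is to argue that no secure bid by bidder $2$ can ever acquire both items, so welfare is bounded by $O(1)$. Since $v_2(a)=v_2(b)=0$, any secure conditional bid $T_2^t$ for bidder $2$ must satisfy $v_2(S')\ge p(S')$ for \emph{every} $S'\subseteq S_2^t\cup T_2^t$; in particular $v_2(\{x\})\ge p(\{x\})$ for any single item $x$ in the provisional-or-bid set. But $v_2(\{x\})=0$, so bidder $2$ can only ever (securely) hold or bid on an item whose price is $0$. As soon as either item has a positive price, bidder $2$'s only secure bid is the empty set. Now have bidder $1$ bid securely on $\{a\}$ (say) at round $0$, which is secure since all prices are $0$; after round $0$ bidder $1$ is the provisional winner of $a$ at price $0$ and no item is in excess demand unless bidder $2$ also bid. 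To rule out the good outcome I would instead let bidder $1$ bid so as to drive at least one price positive — e.g.\ bidder $1$ bids $\{a\}$ while bidder $2$ bids $\{a,b\}$ in round $0$; then $a$ is in excess demand, its price rises to $\epsilon$, and from round $1$ on bidder $2$ (if secure) cannot bid on $a$ at all, and cannot hold both items. A short case analysis then shows the final allocation gives bidder $2$ at most one item (value $0$) and bidder $1$ at most both items (value $2$), so the realized welfare is at most $2$, while the optimum is $M$. Letting $M\to\infty$ gives no constant (indeed no finite) welfare guarantee.

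The main obstacle is bookkeeping the interaction of the two bidders' bids so that the instance is genuinely forced: I must make sure that \emph{whatever} secure strategy bidder $2$ adopts, some price goes positive early (so that security then paralyzes bidder $2$), while also confirming bidder $1$'s bids stay secure throughout (trivial, since $v_1$ is additive with value $1$ per item and prices only rise in $\epsilon$ increments, so as long as bidder $1$ drops items once their price exceeds $1$ she remains secure). One clean way to close this: note that if bidder $2$ ever wants both items she must at some round be the provisional winner of one item at a positive price, which by security is impossible since her value for a singleton is $0$; hence bidder $2$ never wins both items in any secure run, and the welfare bound $\omega(\mathcal{S})\le \max(2,\, M\cdot 0 + \ldots)=O(1)$ follows. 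Since $M$ is arbitrary, no welfare guarantee exists, completing the proof.
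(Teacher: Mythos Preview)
Your construction is correct and follows essentially the same approach as the paper: both exhibit a two-item instance where a super-additive ``big'' bidder cannot securely acquire the bundle because the singleton security constraints fail (the paper uses three bidders with the big bidder having singleton value $\tfrac12$, so prices must first be driven past $\tfrac12$; your choice of singleton value $0$ paralyzes the big bidder immediately, which is if anything cleaner). Your intermediate case analysis in which bidder~$2$ bids $\{a,b\}$ at round~$0$ is both unnecessary and not actually a secure bid (the incremented price $\epsilon$ already exceeds $v_2(\{a\})=0$), but your final ``clean'' argument---that bidder~$2$ can never securely place a nonempty conditional bid and hence never wins anything---is the right one and suffices.
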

\begin{proof}
Let $M$ be a positive integer. Consider  an auction with three bidders $\{0,1,2\}$ 
and two items $\{a,b\}$. The bidders $\{0,1\}$ are unit-demand.
The value of item $a$ for bidder $0$ is equal to $1$ and the value of item $b$ for 
bidder $1$ is equal to $1$. Bidder $2$ has a super-additive valuation function.
Its value for each individual item is $\frac 12$ but it has value $M$ for the set $\{ a,b\}$.
Observe that if bidder $2$ bids securely, then she cannot bid on the items $a$ or $b$ once their 
prices rise beyond $\frac12$. Therefore the welfare is only $1$ with item $a$ allocated to bidder $0$ and item $b$ allocated to bidder $1$.
Clearly, the optimal allocation has welfare $M$ which can be
arbitrarily large.
\qed
\end{proof}

The second unsatisfactory aspect of Lemma~\ref{lem:allocnonempty} is that
the structure of the bids used there is extremely artificial, since 
the bidders need to know all the valuation functions in order to calculate the secure bids. 
Theorem~\ref{thm:profitmax} shows we can circumvent both of these
problems if the bidders' valuation functions are $\alpha$-near-submodular.
Then a good welfare guarantee can be obtained if the bidders make 
profit-maximizing secure bids.

\restatethm{\ref{thm:profitmax}}{
Given bidder valuation functions that are $\alpha$ near-submodular. 
Assume moreover that all the set values are multiple of $\epsilon$. If each bidder
bids for a profit-maximizing (conditional) secure set in every round then the SMRA outputs
a solution $\mathcal{S}$ with welfare $\omega(\mathcal{S})\ge \frac{1}{1+\alpha}\cdot  \omega(\mathcal{S}^*)$.
}

\notshow{\begin{proof}
Let $\mathcal{S}^*=\{S^*_1,\dots, S^*_n\}$ be the optimal allocation, and let $\mathcal{S}=\{S_1,\dots, S_n\}$
be the assignment output by the SMRA. By assumption, $S_i$ is the most profitable secure set in the final round,
and the conditional bid $T_i$ is empty in the final round. Thus $S_i$ was the provisional set for bidder $i$ 
in the penultimate round.

We claim that $v_i(S_i)+ p(S^*_i)\ge \frac{1}{4\alpha}\cdot v_i(S^*_i)$.
We may assume that $v_i(S_i)\le \frac12 v_i(S^*_i)$. If not, the claim is trivially satisfied.

Let $S^*_i \setminus S_i =\{x_1, x_2, \dots, x_k\}$ and let $X^j=\{x_1, x_2, \dots, x_j\}$, for each $j\le k$.
Then, for any $W\subseteq S_i$, we have
\begin{eqnarray*}
\frac12\cdot v_i(S^*_i) &\le& v_i(S_i\cup X^k)- v_i(S_i) \\
&=& \sum_{j=1}^k  \left( v_i(S_i\cup X^j)-v_i(S_i\cup X^{j-1})\right) \\
%&=& \sum_{j=1}^k  m_j\\
&\le& \sum_{j=1}^k  \alpha \cdot \left( v_i(W\cup x_j)-v_i(W)\right)
\end{eqnarray*}
Here the first inequality follows by the free-disposal property.
The second inequality  is due to $\alpha$- near-submodularity.
Now define $m_j=v_i(S_i\cup X^j)-v_i(S_i\cup X^{j-1})$. Thus,
$\sum_{j=1}^k m_j = v_i(S_i \cup S_i^*) - v_i(S_i) \ge \frac12\cdot v_i(S^*_i)$  since $v_i(S_i)\le \frac12 v_i(S^*_i)$.

Let $Y$ be the set of indices of items in $S^*_i \setminus S_i$ for which
$p(x_j) < \frac{1}{2\alpha}\cdot m_j$. First suppose that $Y$ is empty.
Then
\begin{eqnarray*}
p(S^*_i) &\ge& \sum_{j} p(x_j) \\
&\ge& \sum_{j} \frac{1}{2\alpha}\cdot  m_j \\
&\ge& \frac{1}{4\alpha}\cdot v_i(S^*_i)
\end{eqnarray*}
But then $v_i(S_i)+ p(S^*_i)\ge \frac{1}{8\alpha}\cdot v_i(S^*_i)$, as desired.

Hence, we may assume that $Y$ is non-empty. So take any $j\in Y$ and
consider the set $W\cup x_j$, for any $W\subseteq S_i$.
By assumption, we have $m_j> 2\alpha\cdot p(x_j)$. And $\alpha\cdot
\left( v_i(W\cup x_j)-v_i(W)\right) \ge m_j$ by $\alpha$ near-submodularity. 
Together this implies that $\left( v_i(W\cup x_j)-v_i(W)\right)> 2 p(x_j)$.
Thus we may bound the profitability of the set $W\cup x_j$ as follows:
\begin{eqnarray}
v_i(W\cup x_j)-p(W\cup x_j)
&=& v_i(W)+\left(v_i(W\cup x_j)-v_i(W) \right) -p(W\cup x_j)\nonumber \\
&>&  v_i(W)+2p(x_j) -p(W\cup x_j) \nonumber \\
&=& v_i(W)-p(W) +p(x_j)  \nonumber\\
&\ge& v_i(W)-p(W) \label{eq:more-profit}\\
&\ge& 0 \label{eq:secure}
\end{eqnarray}
Two observations follow. First plugging in $W=S_i$ into inequality (\ref{eq:more-profit})
shows that the set $S_i\cup x_j$ is more profitable than the set $S_i$.
Second, inequality (\ref{eq:secure}) shows that every subset of $S_i\cup x_j$
is profitable. Thus the bid $S_i\cup x_j$ is secure and more profitable than
$S_i$, a contradiction.

The claim follows. Therefore, we have that
$\sum_i \left(v_i(S_i)+ p(S^*_i)\right) \ge \frac{1}{4\alpha}\cdot \sum_i v_i(S^*_i)$.
Note that $\omega(\mathcal{S})=\sum_i v_i(S_i)$.
Moreover, recall, that in the SMRA every item with a positive price is sold. Thus, by the disjointness of
$\{S^*_1,\dots, S^*_n\}$, we have that $\omega(\mathcal{S})\ge \sum_i p(S^*_i)$.
Consequently, we have that $\omega(\mathcal{S})\ge \frac{1}{8\alpha}\cdot \omega(\mathcal{S}^*)$.
\qed
\end{proof} }

\begin{proof}
Let $\mathcal{S}^*=\{S^*_1,\dots, S^*_n\}$ be the optimal allocation, and let $\mathcal{S}=\{S_1,\dots, S_n\}$
be the assignment output by the SMRA. By assumption, $S_i$ is the most profitable secure set in the final round,
and the conditional bid $T_i$ is empty in the final round and then $S_i$ was the provisional set for bidder $i$ 
in the penultimate round.
Let $S^*_i \setminus S_i =\{x_1, x_2, \dots, x_k\}$ and let $X^j=\{x_1, x_2, \dots, x_j\}$, for each $j\le k$.

For every item $x_{j}\in S_{i}$, since the conditional bid $T_{i}$ is empty, there are two possibilities.
\begin{itemize}
\item {\bf Case 1:} $\{x_{j}\}$ is a secure conditional set  but not as profitable as $\emptyset$. Then $v_{i}\big(S_{i}\cup \{x_{j}\}\big)-p(x_{j}) -  \epsilon < v_{i}(S_{i})$. Let $Q_{ij}$ be $S_{i}$, and we have $p(x_{j}) \geq v_{i}\big(Q_{ij}\cup\{x_{j}\}\big)-v_{i}(Q_{ij})$ since values are multiple of $\epsilon$.
\item {\bf Case 2:} $\{x_{j}\}$ is an insecure conditional set. Then there exist a set $Q\subseteq S_{i}$ such that $v_{i}\big(Q\cup\{x_{j}\}\big) < p\big(Q\cup\{x_{j}\}\big)$. On the other hand, since $S_{i}$ is a secure set, $v_{i}\big(Q\big) \geq p\big(Q\big)$. Let $Q_{ij}$ be $Q$, and we have $p(x_{j})\geq v_{i}\big(Q_{ij}\cup\{x_{j}\}\big)-v_{i}(Q_{ij})$.
\end{itemize}
In both case, we have $p(x_{j})\geq v_{i}\big(Q_{ij}\cup\{x_{j}\}\big)-v_{i}(Q_{ij})$. Using these inequalities, we can bound $v_{i}(S_{i}^{*})$.
\begin{eqnarray*}
v_{i}(S_{i}^{*})&\le& v_i(S_i\cup X^k)\\
&=& v_{i}(S_{i})+\sum_{j=1}^k  \left( v_i(S_i\cup X^j)-v_i(S_i\cup X^{j-1})\right) \\
%&=& \sum_{j=1}^k  m_j\\
&\le& v_{i}(S_{i})+\sum_{j=1}^k  \alpha \cdot \left(v_{i}\big(Q_{ij}\cup\{x_{j}\}\big)-v_{i}(Q_{ij})\right)\\
&\le& v_{i}(S_{i})+\alpha\cdot\sum_{j=1}^k p(x_{j})\\
&\le& v_{i}(S_{i})+\alpha\cdot p(S_{i}^{*})\end{eqnarray*}
The second inequality is because $Q_{ij}\subseteq S_{i}$ and $v_{i}(\cdot)$ is $\alpha$-near-submodular. The third inequality is derived from the case analysis above.

Finally, we are ready to bound the welfare ratio.
  \begin{eqnarray*} 
  \sum v_i(S_i^*) 
   &\leq& \sum_{i=1}^n  v_i(S_i) + \alpha\cdot \sum_{i=1}^n p(S_i^*) \\
   &\leq& \sum_{i=1}^n  v_i(S_i) + \alpha\cdot \sum_{i=1}^n p(S_i) \\
  &\le& (1+\alpha)\cdot \sum_{i=1}^n v_i(S_i) 
   \end{eqnarray*} 
The last inequality holds because under secure bidding, the allocation is individually rational. \qed
\end{proof}

The bound in Theorem~\ref{thm:profitmax} is almost tight. 
This can be seen by adapting the example in Section~\ref{sec:tight-truthful}.

Thus, under secure bidding we are able to match the welfare guarantee of truthful bidding
without having to lose individual rationality. This suggests that secure bidding might be 
the best strategy to use in an SMRA auction. Unfortunately, this is probably not the case.
Recall, from Section \ref{sec:truthful-smra}, that in addition to the basic ascending price
mechanism the SMRA has an associated set of bidding activity rules to encourage truthful bidding.
As discussed, the rules are actually too weak to ensure truthful bidding. But the rules
are strong enough to make secure bidding very risky.
In particular, each bidder has an amount of eligibility points. The larger the number of points the
bigger the collection of items a bidder may bid on. A bidder loses eligibility points if she
bids on a small set -- such bids will then hurt the bidder in later rounds. 
Observe that secure bidding naturally favours bidding upon small sets and is, thus, risky.

\end{document}